\pdfoutput=1
\documentclass[lettersize,journal]{IEEEtran}
\usepackage{amsmath,amsfonts}
\usepackage{algorithmic}
\usepackage{algorithm}
\usepackage{array}
\usepackage[caption=false,font=normalsize,labelfont=sf,textfont=sf]{subfig}
\usepackage{textcomp}
\usepackage{stfloats}
\usepackage{url}
\usepackage{verbatim}
\usepackage{graphicx}
\usepackage{cite}
\usepackage{bm}
\usepackage{mathtools}
\usepackage{amssymb}
\usepackage{amsthm}

\newtheorem{proposition}{Proposition}
\newtheorem{assumption}{Assumption}
\newtheorem{remark}{Remark}

\DeclareMathOperator{\Tr}{Tr}
\DeclareMathOperator{\diag}{diag}

\usepackage{threeparttable}
\usepackage{booktabs}  
\usepackage{makecell}   
\usepackage{multirow}  
\usepackage{enumerate}

\usepackage{caption}
\usepackage{stfloats}
\usepackage{subeqnarray}

\usepackage{cases}
\begin{document}

\title{Multicast Capacity of XL-RIS Assisted Hybrid Near- and Far-Field mmWave Communications}
\author{Hui Chen\textsuperscript{1}, Qi Wu\textsuperscript{2}, Hongcheng Zhuang\textsuperscript{1,2}\\
	{\normalsize \textsuperscript{1}School of Electronics and Communication Engineering,} 
	{\normalsize Sun Yat-sen University, Shenzhen 518107, China}\\
	{\normalsize \textsuperscript{2}Department of Broadband Communication, Peng Cheng Laboratory, Shenzhen 518055, China}\\
	{\normalsize chenh525@mail2.sysu.edu.cn, wuq05@pcl.ac.cn, zhuanghch@mail.sysu.edu.cn}
	\thanks{This work has been submitted to the IEEE for possible publication.  Copyright may be transferred without notice, after which this version may no longer be accessible.}
}
\maketitle

\begin{abstract}
 Multicast transmission in millimeter-wave (mmWave) networks is fundamentally limited by the weakest user, and blockages further exacerbate this problem. Large-scale reconfigurable intelligent surfaces (XL-RIS) offer a promising solution by providing high array gain to overcome blockages. However, the large aperture of XL-RIS significantly expands the near-field region, creating a hybrid-field scenario where some users lie in the near-field while others remain in the far-field. Existing hybrid-field studies on XL-RIS have primarily focused on channel estimation and deployment optimization, leaving multicast capacity analysis unexplored. This paper investigates the fundamental capacity limits of XL-RIS-assisted multicast communications in hybrid-field scenarios. For the fundamental two-user case consisting of one near-field and one far-field user, we derive the optimal closed-form covariance matrix and optimize the RIS phase shifts via manifold optimization. We establish that the multicast capacity scales as $\Theta(\log_2(MN))$ as the number of transmit antennas $M$ and/or RIS elements $N$ grow large, and prove this scaling is order-tight. Numerical results validate the bounds and show the impact of $M$, $N$, and distance on the multicast rate.
\end{abstract}

\begin{IEEEkeywords}
Multicast capacity, mmWave, reconfigurable intelligent surface, hybrid-field, manifold optimization
\end{IEEEkeywords}

\section{Introduction}
Multicast transmission is a core technology in scenarios such as vehicular networks and real-time video streaming, where identical information must be delivered to multiple users simultaneously\cite{intro1,intro2}. As one of the essential transmission modes in 6G networks, multicast imposes stringent requirements on data rate and coverage capability\cite{intro3}. To meet these demands, large-scale antenna arrays and millimeter-wave (mmWave) bands have become critical enablers for 6G\cite{intro4,intro5}. However, mmWave signals are highly susceptible to blockage, and in multicast systems, the performance is limited by the weakest user, making unreliable connectivity become the primary performance bottleneck.

Large-scale reconfigurable intelligent surfaces (XL-RIS) have emerged as a promising solution to overcome blockage in mmWave communications \cite{intro8}. Compared with conventional RIS, the larger aperture of XL-RIS provides significantly higher array gain, which is particularly beneficial for multicast systems where the weakest user limits the overall data rate. By shaping the wireless environment through programmable reflection, an XL-RIS can provide alternative reflection paths for users whose direct links to the base station (BS) are obstructed. This capability improves coverage and spectral efficiency\cite{intro8-1,intro8-2}, particularly in dense urban scenarios where blockages are frequent. Moreover, XL-RIS operates passively with minimal power consumption, making it an energy-efficient solution for beyond-5G networks. 

With the large aperture of XL-RIS and the short wavelengths in mmWave bands, the near-field region expands significantly, creating a hybrid-field scenario where some users lie in the near-field while others remain in the far-field region of the RIS. This coexistence introduces new challenges for channel modeling, RIS phase design, and capacity analysis, particularly in multicast where the weakest user determines the data rate.
Recent studies have investigated the hybrid-field properties of XL-RIS\cite{intro8-3,intro8-4,intro8-5,intro8-6,intro8-7}. In \cite{intro8-3}, the authors consider a hybrid-field scenario where the BS–RIS link is far-field, while the RIS–user link is near-field with visual region effects, and develops a sparse Bayesian learning algorithm for channel estimation. The authors in \cite{intro8-4} focus on RIS element failures in a similar hybrid-field setting and propose a three-stage signal-assisted sparse representation-based channel estimation scheme with double domain filtering and failure-aware orthogonal matching pursuit algorithms to achieve robust channel estimation with low pilot overhead. Building on these, the authors in \cite{intro8-5} extend the study to three channel categories: far-far, far-near, and near-near fields. They propose a unified channel estimation method for XL-RIS assisted XL-MIMO systems with hybrid beamforming, exploiting the low-rank structure of the BS–RIS channel to achieve high accuracy with reduced pilot overhead. \cite{intro8-6} proposes a convolutional dictionary learning-based method for hybrid-field scenarios, where far-field and near-field paths of RIS-User coexist, by unrolling proximal gradient descent into neural networks to learn channel representations. Different from the aforementioned works that focus on channel estimation, \cite{intro8-7} investigates XL-RIS placement strategies for beam focusing in hybrid near-field and far-field mmWave communications. Although these works have advanced the understanding of XL-RIS in hybrid-field scenarios, they primarily focus on channel estimation and RIS deployment optimization, rather than multicast capacity analysis.
 
On the other hand, from the multicast capacity perspective, several studies have been conducted that do not consider the hybrid-field effect of XL-RIS. For instance, the authors in \cite{intro9-pre} focus on improving multicast capacity in ultra-dense low earth orbit network through beamforming design and subchannel assignment. For conventional far-field multicast systems, the authors in \cite{intro9} derive the fundamental capacity limits and characterize the asymptotic scaling behavior under different regimes of antennas and users. \cite{intro10} extends the asymptotic capacity analysis to antenna subset selection in multicast channels. More recently, RIS-assisted multicast capacity has been investigated under far-field assumptions, with joint optimization of transmit covariance and RIS phase shifts\cite{intro11}. They provide asymptotic maximal capacity order when the numbers of antennas, reflecting elements, or users grow large. Moreover, \cite{intro12} investigates the channel capacity of near-field multiuser communications, including the multicast channel as a special case. For the two-user near-field multicast scenario, they derive the optimal beamforming vector and the corresponding multicast capacity in closed form. They also provide asymptotic results showing that the near-field multicast capacity converges to a finite value as the number of antennas grows large, in contrast to the unbounded growth predicted by far-field models. Despite these efforts, the capacity analysis of multicast systems in the hybrid-field scenario, where near-field and far-field users coexist on the RIS–user link, has not been addressed, to the best of our knowledge.

To fill this gap, this paper investigates the multicast capacity in XL-RIS assisted hybrid-field mmWave communications. The main contributions are summarized as follows.

\begin{itemize}
	\item For the fundamental two-user scenario comprising one near-field and one far-field user of XL-RIS, we derive the optimal closed-form expression of the covariance matrix. Moreover, we propose a manifold optimization algorithm to efficiently optimize the RIS phase shifts, achieving near-optimal performance.
	
	\item We derive a rigorous upper bound on the multicast capacity for the two-user hybrid near-/far-field case, showing that it scales as $O(\log_2(MN))$ as $M$ and/or $N$ grow large. Under two mild geometric assumptions and the standard large-aperture stationary-phase approximation for the dominant RIS-side oscillatory sum, we further establish a matching lower bound, thereby showing the order-tight scaling $C=\Theta(\log_2(MN))$.
	
	\item Numerical results validate the tightness of the upper bound and to illustrate the key system behaviors, including the impact of the number of transmit antenna, the distance of user and RIS elements' number on the achievable multicast rate.

\end{itemize}

\section{SYSTEM MODEL}
\begin{figure}[htbp]
		\centering
		\includegraphics[width=0.40\textwidth]{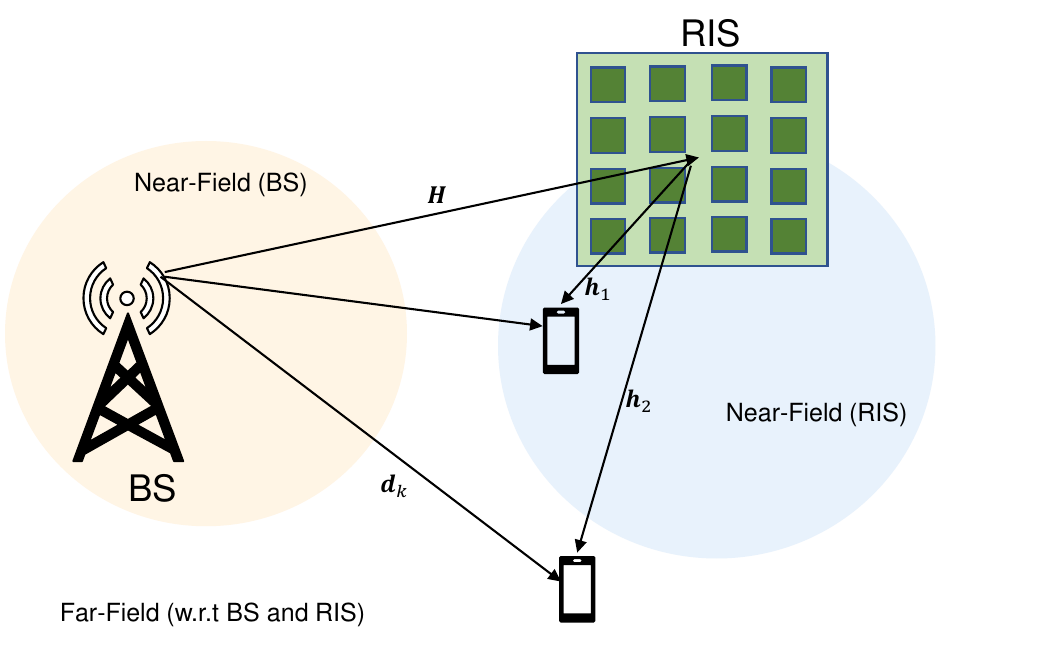}
	\caption{RIS-assisted multicast transmissions.}
	\label{fig1}
\end{figure}
We consider a downlink multicast communication system as illustrated in Fig. 1, where a BS equipped with an uniform planar array (UPA) of $M=M_hM_v$ antennas serves multiple single-antenna users with a common message. To overcome the obstruction of obstacles, an XL-RIS consisting of $N=N_hN_v$ passive reflecting elements arranged as a UPA is deployed to assist the communication.

Due to the large physical apertures of the XL-RIS, as well as the use of mmWave carrier frequencies, the conventional far-field assumption may no longer hold for all users. Specifically, users located within the near-field region of the XL-RIS experience spherical wavefronts and thus reside in the near-field, where the channel response depends on both angular direction and distance. In contrast, users outside these regions operate under the far-field condition, characterized by planar wavefronts and angle-only dependence.

\subsection{Channel Model}
The accurate modeling of the channels in the BS-RIS-user or BS-user link necessitates a clear demarcation between the near-field and far-field regimes, a distinction rigorously defined by the Rayleigh distance\cite{intro2-2}. Based on the relative positions of the nodes, we employ the appropriate channel model as follows.
\subsubsection{BS-RIS link}
First, the Rayleigh distance for the BS-RIS link is given by 
\begin{equation}\label{eq1}
	z_{BR}=\frac{2(D_{BS}+D_{RIS})^2}{\lambda}
\end{equation}
where the effective aperture size is a combination of both the BS and RIS dimensions, $D_{BS}=(\lambda/2)\sqrt{M_h^2+M_v^2}$ and $D_{RIS}=(\lambda/2)\sqrt{N_h^2+N_v^2}$, respectively.

While the link distance $z_H$ exceeds the threshold, i.e., $z_H > z_{BR}$, the channel operates in the far-field region. Consequently, the channel matrix $\boldsymbol{H} \in \mathbb{C}^{N \times M}$ can be modeled using planar-wavefront, which can be modeled as Saleh-Valenzuela \cite{channelSV} and can be expressed as:
\begin{equation}\label{eq2}
	\boldsymbol{H}=\sqrt{\frac{MN}{L_{BR}}}\sum_{l=0}^{L_{BR}-1} \beta_{l} \boldsymbol{a}_{R,l}\left(\varphi_{l}^{\text{AoA}}, \vartheta_{l}^{\text{AoA}}\right) \boldsymbol{a}_{B,l}^{H}\left(\varphi_{l}^{\text{AoD}}, \vartheta_{l}^{\text{AoD}}\right),
\end{equation}
where $L_{BR}$ is the number of propagation paths, each characterized by a complex gain $\beta_l\sim\mathcal{CN}(0,10^{-0.1PL_l(d)})$. $\boldsymbol{a}_R(\cdot)$ and $\boldsymbol{a}_B(\cdot)$ are the far-field array response vectors at the RIS and the BS, respectively. Moreover, we denote by $\theta_l^{\text{AoA}}$ and $\theta_l^{\text{AoD}}$ the angle of arrival (AoA) and the angle of departure (AoD) of the $l$-th path, respectively. The elevation and azimuth angles satisfy $\vartheta_l^{\text{AoA}} \in (0, \pi)$ and $\varphi_l^{\text{AoA}} \in (0, \pi)$, respectively.  

Given that both the BS and the RIS are equipped with UPAs, their response to a signal arriving from an azimuth angle $\varphi$ and elevation angle $\vartheta$ follows the closed-form expression below
\begin{equation}\label{eq3}
	\begin{aligned}
		\boldsymbol{a}_{\text{UPA}}  \left(\varphi, \vartheta\right) 
		= & \frac{1}{\sqrt{N_hN_v}}\big[1, \ldots, e^{j \frac{2 \pi d}{\lambda}\left(i_h \sin\vartheta\sin\varphi+i_v \cos\vartheta\right)} \\
		& \ldots, e^{j \frac{2 \pi d}{\lambda}\left(\left(N_h-1\right) \sin\vartheta\sin\varphi+\left(N_v-1\right) \cos\vartheta\right)}\big]^{\mathrm{T}},
	\end{aligned}
\end{equation}
where $d$ and $\lambda$ denote the inter-antenna spacing and the signal wavelength, respectively. $N_h$ and $N_v$ represent the numbers of horizontal and vertical antennas, respectively. The antenna indices in each dimension are denoted by $i_h \in [0, N_h-1]$ and $i_v \in [0, N_v-1]$.

\subsubsection{RIS-User $k$ Link}
For the RIS-User $k$ link, the Rayleigh distance is given by $z_{RU}=\frac{2D_{RIS}^2}{\lambda}$, which depends only on the RIS aperture $D_{RIS}$. 
\begin{enumerate}[a)]
	\item near-field scenario: For a user within the near-field region of RIS, its location is within $z_{h_k} \leq z_{RU}$.
Assuming that the RIS is placed in the $y-z$ plane, with $N_y$ and $N_z$ are odd integers, the element indices can be symmetrically defined as $\mathcal{N}_y=\{-\frac{N_y-1}{2},\cdots,\frac{N_y-1}{2}\}$ and $\mathcal{N}_z=\{-\frac{N_z-1}{2},\cdots,\frac{N_z-1}{2}\}$. The position of the $(n_y,n_z)$-th element is thus $\boldsymbol{s}_{n_y,n_z}=[0,n_yd,n_zd]^T$. The $k$-th near-field user is located at $\boldsymbol{u}_k=[u_{k,x},u_{k,y},u_{k,z}]^T=[r_k\sin\vartheta_k\cos\varphi_k,r_k\sin\vartheta_k\sin\varphi_k,r_k\cos\vartheta_k]^T$, where $r_k$ represents the propagation distance between the array center and the $k$-th near-field user, whose direction is specified by the elevation angle $\vartheta_k\in(0,\pi)$ and azimuth angle $\varphi_k\in(0,\pi)$. Specifically, the distance between the $k$-th near-field user and the array element $(0,n_yd,n_zd)$ is denoted by
\begin{equation}\label{eq4}
	\begin{aligned}
		&r_{n_y, n_z, k}=\Vert \boldsymbol{u}_k-\boldsymbol{s}_{n_y,n_z}\Vert\\
		&=r_k\sqrt{1+n_y^2\frac{d^2}{r_k^2}+n_z^2\frac{d^2}{r_k^2}-2n_y\frac{d}{r_k}\Phi_k-2n_z\frac{d}{r_k}\Omega_k} ,
	\end{aligned}
\end{equation}
where $\Phi_k=\sin\vartheta_k\sin\varphi_k$, $\Omega_k=\cos\vartheta_k$ and $\Psi_k=\sin\vartheta_k\cos\varphi_k$. Denote $\epsilon_k=\frac{d}{r_k}$.
The channel from the $(n_y,n_z)$-th element to the $k$-th near-field user, which considers free-space path-loss, effective aperture and polarization mismatch, can be expressed as\cite{intro12}
\begin{equation}\label{eq5}
	h_{n_y,n_z,k} \approx
	\sqrt{\frac{A r_k |\Psi_k|}{4\pi r_{n_y,n_z,k}^{3}}}
	e^{-j\frac{2\pi}{\lambda}r_{n_y,n_z,k}},
\end{equation}
where $A$ is the area occupied by each element, i.e., $\sqrt{A}$ is the element size along both the $y$- and $z$-axes. According to \cite{ref1,ref3,ref3-1}, near-field channels exhibit sparse scattering and are dominated by LoS propagation. Hence, we consider only the LoS path for the near-field channels.

Thus, the cascade channel vector $\boldsymbol{h}_k \in \mathbb{C}^{N \times 1}$
\begin{equation}\label{eq6}
	\boldsymbol{h}_k= \left[h_{-\frac{N_y-1}{2}, -\frac{N_z-1}{2},k}, \ldots, h_{\frac{N_y-1}{2}, \frac{N_z-1}{2},k}\right]^T.
\end{equation}
\item far-field scenario: If the user is in the far-field, i.e., $z_{h_k} > z_{RU}$, the channel can be represented as
	\begin{equation} \label{eq7}     
		\boldsymbol{h}_k= \sqrt{N/L_{k,RU}}\sum\nolimits_{l=0}^{L_{k,RU}-1}\beta_{l,k} \boldsymbol{a}_{R,l}\left(\varphi_{k,l}^{\text{AoD}}, \vartheta_{k,l}^{\text{AoD}}\right).
	\end{equation}
\end{enumerate}

\subsubsection{BS-User $k$ Link}
Both users are assumed to be located in the far-field region of the BS. In the far-field case, i.e., $z_{d_k} > z_{BU}$, where $z_{BU}=\frac{2D_{BS}^2}{\lambda}$, the planner wavefront must be considered, yielding the channel vector $\boldsymbol{d}_k \in \mathbb{C}^{M \times 1}$ as
\begin{equation}\label{eq8}
	\boldsymbol{d}_k= \sqrt{M/L_{k,BU}}\sum\nolimits_{l=1}^{L_{k,BU}}\alpha_{l,k} \boldsymbol{a}_{B,l}\left(\varphi_{k,l}^{\text{AoD}}, \vartheta_{k,l}^{\text{AoD}}\right),
\end{equation}
where the NLOS paths are only considered due to the obstacle blockage.

\subsection{Problem Formulation}
The received signal at user $k$ is 
\begin{equation}\label{eq11}
	y_k=(\boldsymbol{h}_k^H\boldsymbol{\Phi} \boldsymbol{H}+\boldsymbol{d}_k^H)\boldsymbol{s}+n_k,
\end{equation}
where $\boldsymbol{s}\in\mathbb{C}^{M\times1}$ is the transmitted symbol vector and $\boldsymbol{\Phi}=\text{diag}\{\boldsymbol{\phi}\}=\text{diag}\{e^{j\theta_1},\cdots,e^{j\theta_N}\}\in\mathbb{C}^{N\times N}$ is the RIS phase shifts matrix. 

Denote $\boldsymbol{R}=\mathbb{E}\{\boldsymbol{s}\boldsymbol{s}^H\}$ be the covariance matrix of $\boldsymbol{s}$ and $\boldsymbol{F}_k=\operatorname{diag}(\boldsymbol{h}_k)^H \boldsymbol{H}$. The multicast capacity is expressed as
\begin{equation}\label{eq12}
	\begin{aligned}
		C= & \max _{\boldsymbol{\phi},\boldsymbol{R} \succeq 0, \operatorname{Tr}(\boldsymbol{R}) \leq P_{\max }} \min _{\forall k} \\
		& \log_2 \left[1+\frac{\left(\boldsymbol{\phi}^H \boldsymbol{F}_k+\boldsymbol{d}_k^H\right) \boldsymbol{R}\left(\boldsymbol{F}_k^H\boldsymbol{\phi}+\boldsymbol{d}_k\right)}{\sigma_k^2}\right].
	\end{aligned}
\end{equation}
By introducing a non-negative auxiliary variable $\gamma$, \eqref{eq12} can be rewritten as
\begin{equation}\notag
	\begin{aligned}
		\mathcal{P}_1:
		\max_{\boldsymbol{\phi},\boldsymbol{R},\gamma}\quad & \gamma\\
		\text { st. }
		&\frac{\left(\boldsymbol{\phi}^H \boldsymbol{F}_k+\boldsymbol{d}_k^H\right) \boldsymbol{R}\left(\boldsymbol{F}_k^H \boldsymbol{\phi}+\boldsymbol{d}_k\right)}{\sigma_k^2}\geq\gamma,\\
		&\operatorname{Tr}(\boldsymbol{R}) \leq P_{\operatorname{max}},\boldsymbol{R} \succeq \boldsymbol{0},\\
		&\left|\phi_n\right|= 1, \forall n.
	\end{aligned}
\end{equation}

\section{CHANNEL CAPACITY}
To investigate hybrid near-/far-field propagation while maintaining analytical tractability, we focus on a two-user multicast setup where user~1 is in the RIS near-field and user~2 in the RIS far-field. Both users lie in the BS far-field (see Section~II).

Specifically, we first propose an alternating optimization (AO) algorithm to jointly optimize the covariance matrix and RIS phase shifts for suboptimal multicast capacity. Then, we derive an upper bound on the multicast capacity for fixed RIS phase shifts and analyze its asymptotic behavior.
\subsection{Proposed AO algorithm}
Accordingly, $\mathcal{P}_1$ is transformed into 
\begin{equation}\notag
	\begin{aligned}
		\mathcal{P}_2:
		\max_{\boldsymbol{R},\boldsymbol{\phi},\gamma}\quad & \gamma\\
		\text { st. }
		&\boldsymbol{f}_1^H\boldsymbol{R}\boldsymbol{f}_1\geq\gamma,\boldsymbol{f}_2^H\boldsymbol{R}\boldsymbol{f}_2\geq\gamma,\\
		&\operatorname{Tr}(\boldsymbol{R}) \leq P_{\operatorname{max}},\boldsymbol{R} \succeq \boldsymbol{0},\\
		&\left|\phi_n\right|= 1, \forall n,
	\end{aligned}
\end{equation}
where $\boldsymbol{f}_1=\frac{\boldsymbol{F}_1^H \boldsymbol{\phi}+\boldsymbol{d}_1}{\sigma_1}$, $\boldsymbol{f}_2=\frac{\boldsymbol{F}_2^H \boldsymbol{\phi}+\boldsymbol{d}_2}{\sigma_2}$. 

\begin{proposition}\label{prop:fixedphi}
	For a fixed RIS phase vector $\boldsymbol{\phi}$, the optimal two-user fair
	SINR of Problem~$\mathcal{P}_2$ is given by
	\begin{equation}\label{eq:gamma_exact}
		\gamma^\star(\boldsymbol{\phi})
		=
		P_{\max}
		\frac{a(\boldsymbol{\phi})b(\boldsymbol{\phi})-|c(\boldsymbol{\phi})|^2}
		{a(\boldsymbol{\phi})+b(\boldsymbol{\phi})-2|c(\boldsymbol{\phi})|},
	\end{equation}
	where
	\begin{equation}\label{eq:abc}
		a(\boldsymbol{\phi}) = \|\boldsymbol{f}_1\|^2,\quad
		b(\boldsymbol{\phi}) = \|\boldsymbol{f}_2\|^2,\quad
		c(\boldsymbol{\phi}) = \boldsymbol{f}_1^H\boldsymbol{f}_2.
	\end{equation}
\end{proposition}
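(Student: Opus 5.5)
The plan is to treat $\mathcal{P}_2$ for fixed $\boldsymbol{\phi}$ as a convex program in $\boldsymbol{R}$: its constraints are linear in $\boldsymbol{R}$, together with the PSD cone. I will prove the value in \eqref{eq:gamma_exact} by exhibiting a feasible rank-one $\boldsymbol{R}$ that attains it (achievability), and then a Lagrangian weak-duality certificate showing nothing larger is possible (converse). Both parts reduce to $2\times 2$ Gram-matrix algebra in $a,b,c$ from \eqref{eq:abc}.

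\textbf{Achievability.} First I align phases. Set $\boldsymbol{u}=\boldsymbol{f}_1$ and $\boldsymbol{v}=\boldsymbol{f}_2e^{-j\arg c}$, so that $\boldsymbol{u}^H\boldsymbol{v}=|c|$ is real; a phase rotation does not change $|\boldsymbol{f}_2^H\boldsymbol{w}|$. I then take
\begin{equation}\notag
\boldsymbol{w}=(b-|c|)\boldsymbol{u}+(a-|c|)\boldsymbol{v}.
\end{equation}
A direct computation gives $\boldsymbol{u}^H\boldsymbol{w}=\boldsymbol{v}^H\boldsymbol{w}=ab-|c|^2$, so both users receive equal gain. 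Expanding the norm should give
\begin{equation}\notag
\|\boldsymbol{w}\|^2=(b-|c|)^2a+(a-|c|)^2b+2(a-|c|)(b-|c|)|c|=(ab-|c|^2)(a+b-2|c|).
\end{equation}
Choosing $\boldsymbol{R}=P_{\max}\boldsymbol{w}\boldsymbol{w}^H/\|\boldsymbol{w}\|^2$ then yields $\boldsymbol{f}_k^H\boldsymbol{R}\boldsymbol{f}_k=P_{\max}(ab-|c|^2)/(a+b-2|c|)$ for $k=1,2$.

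\textbf{Converse.} Since $\gamma\le \mu\,\boldsymbol{f}_1^H\boldsymbol{R}\boldsymbol{f}_1+(1-\mu)\boldsymbol{f}_2^H\boldsymbol{R}\boldsymbol{f}_2$ for any $\mu\in[0,1]$, we get
\begin{equation}\notag
\gamma\le \Tr\big(\boldsymbol{R}\,\boldsymbol{Q}_\mu\big)\le P_{\max}\lambda_{\max}(\boldsymbol{Q}_\mu),\qquad \boldsymbol{Q}_\mu=\mu\boldsymbol{f}_1\boldsymbol{f}_1^H+(1-\mu)\boldsymbol{f}_2\boldsymbol{f}_2^H.
\end{equation}
I choose $\mu^\star=(b-|c|)/(a+b-2|c|)$. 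I then check that the same $\boldsymbol{w}$ is an eigenvector of $\boldsymbol{Q}_{\mu^\star}$ with eigenvalue $(ab-|c|^2)/(a+b-2|c|)$. Indeed, $\boldsymbol{Q}_{\mu^\star}\boldsymbol{w}=(ab-|c|^2)\big(\mu^\star\boldsymbol{u}+(1-\mu^\star)\boldsymbol{v}\big)$, and $\mu^\star\boldsymbol{u}+(1-\mu^\star)\boldsymbol{v}=\boldsymbol{w}/(a+b-2|c|)$.

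It remains to show that this eigenvalue is the largest one. $\boldsymbol{Q}_{\mu^\star}$ has rank at most two. Its trace is
\begin{equation}\notag
\Tr(\boldsymbol{Q}_{\mu^\star})=\mu^\star a+(1-\mu^\star)b=\frac{2ab-|c|(a+b)}{a+b-2|c|}.
\end{equation}
Subtracting the claimed eigenvalue leaves the other eigenvalue $(ab-|c|(a+b)+|c|^2)/(a+b-2|c|)=(a-|c|)(b-|c|)/(a+b-2|c|)$. This is no larger, because $ab-|c|^2\ge(a-|c|)(b-|c|)$ reduces to $|c|(a+b-2|c|)\ge0$. Matching the achievability value closes the argument.

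\textbf{Main obstacle.} The delicate point is the regime of validity. The weights $\mu^\star\in[0,1]$ and the nonnegative combination defining $\boldsymbol{w}$ require $|c|\le\min(a,b)$. If instead, say, $|c|>a$, then beamforming toward $\boldsymbol{f}_1$ alone already satisfies user 2, and the true optimum is $P_{\max}a$, not \eqref{eq:gamma_exact}. The degenerate case $a+b=2|c|$ (so $\boldsymbol{f}_1=\boldsymbol{f}_2$) also needs separate handling. I would therefore state the proposition under the standing assumption $|c|<\min(a,b)$, the generic case in which both constraints are active, and add a remark giving the single-user value $P_{\max}\min(a,b)$ in the complementary case.
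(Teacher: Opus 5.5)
Your argument is correct under the hypothesis you impose, and it takes a different route from the paper. The paper works from the KKT system. It asserts (Proposition~\ref{prop:kkt}) that $\lambda_1,\lambda_2>0$, so both constraints are active. It then extracts $|x_{12}|=\gamma$, $\gamma=\mu P_{\max}$ and $\delta_1=\delta_2$, fixing the parity of $t$ by the informal remark that $\mu$ ``should be as large as possible''. Finally it solves the resulting linear system for $(\lambda_1,\lambda_2,\mu)$.

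You instead give an explicit primal--dual certificate. The rank-one beamformer $\boldsymbol{w}\propto(b-|c|)\boldsymbol{f}_1+(a-|c|)e^{-j\arg c}\boldsymbol{f}_2$ achieves the value, and weak duality with weights $(\mu^\star,1-\mu^\star)$ caps it. Those weights are exactly the paper's $\lambda_1^\star,\lambda_2^\star$ in \eqref{eq33a}--\eqref{eq33b}, and your top eigenvalue of $\boldsymbol{Q}_{\mu^\star}$ is its $\mu^\star$ in \eqref{eq33c}. So you verify that the paper's KKT point is optimal, rather than deducing it from necessary conditions. Your route buys several things:
\begin{itemize}
\item it is self-contained, with no strong duality, no Proposition~\ref{prop:kkt} and no parity heuristic;
\item it gives a simple explicit rank-one optimal $\boldsymbol{R}$;
\item it makes the validity regime visible, since $\mu^\star\in[0,1]$ iff $|c|\le\min(a,b)$.
\end{itemize}

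That restriction is genuinely needed, not an artifact of your method. If, say, $|c|>b$, the formula equals $P_{\max}\bigl(b-(|c|-b)^2/(a+b-2|c|)\bigr)$, which is strictly below the true optimum $P_{\max}b$. The paper misses this because the proof of Proposition~\ref{prop:kkt} in Appendix~A misorders the eigenvalues of $\mu\boldsymbol{I}-\boldsymbol{Z}$: the nonzero one is $\mu-v_M$, not $\mu-v_1$. In fact the point $\lambda_1=0$, $\mu=b$, $\boldsymbol{Z}=b\boldsymbol{I}-\boldsymbol{f}_2\boldsymbol{f}_2^H$, $\boldsymbol{R}=P_{\max}\boldsymbol{f}_2\boldsymbol{f}_2^H/b$ satisfies every KKT condition whenever $|c|\ge b$. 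This case is not exotic in this system. When the LoS BS--RIS path dominates, both $\boldsymbol{f}_k$ are nearly parallel to $\boldsymbol{a}_{B,0}$, so $|c|\approx\sqrt{ab}>\min(a,b)$ unless $a\approx b$.

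Two minor points. Your hypothesis can be relaxed to $|c|\le\min(a,b)$ with $a+b>2|c|$. Also, $a+b=2|c|$ forces $\boldsymbol{f}_2=e^{j\theta}\boldsymbol{f}_1$, not literally $\boldsymbol{f}_1=\boldsymbol{f}_2$.
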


\noindent\textit{Proof:}
The proof follows from the KKT analysis of Problem $P_2$. For fixed $\boldsymbol{\phi}$, the two fairness constraints are simultaneously active at the optimum, which gives $\boldsymbol{f}_1^H \boldsymbol{R}^{\star} \boldsymbol{f}_1 = \boldsymbol{f}_2^H \boldsymbol{R}^{\star} \boldsymbol{f}_2 = \gamma^{\star}(\boldsymbol\phi)$. Solving the corresponding stationarity and complementary slackness conditions yields the expression above. Since the appendix proof below no longer relies on the exact-SINR formula, the main role of Proposition~1 is to characterize the fixed-phase optimum in closed form.

$\mathcal{P}_2$ is a joint optimization over $\boldsymbol{R}$ and $\boldsymbol{\phi}$. To tackle it, we adopt an AO approach. For a fixed $\boldsymbol{\phi}$ satisfying $\left|\phi_n\right|= 1, \forall n$, we first optimize over $\boldsymbol{R}$ and $\gamma$. The Lagrange function for this sub-problem is given by
\begin{equation}\label{eq13}
	\begin{aligned}
	&\mathcal{L}(\boldsymbol{R},\gamma,\lambda_1,\lambda_2,\mu,\boldsymbol{Z})\\
	=&\gamma+\lambda_1(\boldsymbol{f}_1^H\boldsymbol{R}\boldsymbol{f}_1-\gamma)+\lambda_2(\boldsymbol{f}_2^H\boldsymbol{R}\boldsymbol{f}_2-\gamma)\\
	&+\mu(P_{max}-\operatorname{Tr}(\boldsymbol{R}))+\operatorname{Tr}(\boldsymbol{ZR}),
	\end{aligned}
\end{equation}
where $\lambda_1\geq0$, $\lambda_2\geq0$, $\mu\geq0$, $\boldsymbol{Z}\succeq \boldsymbol{0}$ are Lagrange multipliers.

Since the sub-problem in $\mathcal{P}_2$ is convex and satisfies Slater's condition, strong duality holds and the optimal solution must satisfy the KKT conditions\cite{convex}. Taking the derivative of the Lagrangian in \eqref{eq13} with respect to the primal variables yields the following stationarity conditions
\begin{equation}\label{eq14}
	\frac{\partial \mathcal{L}}{\partial \boldsymbol{R}}=\lambda_1\boldsymbol{f}_1\boldsymbol{f}_1^H+\lambda_2\boldsymbol{f}_2\boldsymbol{f}_2^H-\mu\boldsymbol{I}+\boldsymbol{Z}=\boldsymbol{0},
\end{equation}
\begin{equation}\label{eq15}
	\frac{\partial \mathcal{L}}{\partial \gamma}=1-\lambda_1-\lambda_2=0.
\end{equation}
The complementary slackness conditions associated with the inequality constraints are given by
\begin{equation}\label{eq16}
	\lambda_1(\boldsymbol{f}_1^H\boldsymbol{R}\boldsymbol{f}_1-\gamma)=0,
\end{equation}
\begin{equation}\label{eq17}
	\lambda_2(\boldsymbol{f}_2^H\boldsymbol{R}\boldsymbol{f}_2-\gamma)=0,
\end{equation}
\begin{equation}\label{eq18}
	\mu(P_{max}-\operatorname{Tr}(\boldsymbol{R}))=0,
\end{equation}
\begin{equation}\label{eq19}
	\operatorname{Tr}(\boldsymbol{ZR})=0.
\end{equation}

\begin{proposition}\label{prop:kkt}
	$\lambda_1>0$, $\lambda_2>0$. 
\end{proposition}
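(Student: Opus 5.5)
The plan is a proof by contradiction on the multipliers of the SNR constraints, using only the KKT system \eqref{eq14}--\eqref{eq19}. Throughout I take $\boldsymbol{f}_1,\boldsymbol{f}_2\neq\boldsymbol{0}$ and assume the non-degeneracy condition
\begin{equation}\notag
|c(\boldsymbol{\phi})|<\min\{a(\boldsymbol{\phi}),b(\boldsymbol{\phi})\},
\end{equation}
with $a,b,c$ as in \eqref{eq:abc}. This is the regime in which the closed form of Proposition~\ref{prop:fixedphi} is meaningful. It holds, for instance, when the near-field and far-field effective channels are weakly correlated, which is the generic case for continuously distributed channels. The steps below show that without it the statement can genuinely fail.

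\textbf{Step 1: the power multiplier is positive.} By \eqref{eq15}, $\lambda_1+\lambda_2=1$, so at least one multiplier is positive. By symmetry it suffices to rule out $\lambda_2=0$; the case $\lambda_1=0$ is identical with the roles of users 1 and 2 exchanged. So assume $\lambda_2=0$, which gives $\lambda_1=1$. Then \eqref{eq14} reads $\boldsymbol{Z}=\mu\boldsymbol{I}-\boldsymbol{f}_1\boldsymbol{f}_1^H$. If $\mu=0$, then $\boldsymbol{Z}=-\boldsymbol{f}_1\boldsymbol{f}_1^H\succeq\boldsymbol{0}$ forces $\boldsymbol{f}_1=\boldsymbol{0}$, which is excluded. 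Hence $\mu>0$, and \eqref{eq18} gives $\operatorname{Tr}(\boldsymbol{R})=P_{\max}$. Moreover, $\boldsymbol{Z}\succeq\boldsymbol{0}$ requires $\mu\ge\|\boldsymbol{f}_1\|^2=a$.

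\textbf{Step 2: the optimal covariance is forced.} If $\mu>a$, then $\boldsymbol{Z}\succ\boldsymbol{0}$, and \eqref{eq19} together with $\boldsymbol{R}\succeq\boldsymbol{0}$ gives $\boldsymbol{R}=\boldsymbol{0}$. This contradicts $\operatorname{Tr}(\boldsymbol{R})=P_{\max}$. Therefore $\mu=a$, and $\boldsymbol{Z}=a\boldsymbol{I}-\boldsymbol{f}_1\boldsymbol{f}_1^H$ is positive semidefinite with null space $\mathrm{span}\{\boldsymbol{f}_1\}$. Since $\operatorname{Tr}(\boldsymbol{Z}\boldsymbol{R})=0$ with both matrices positive semidefinite, the range of $\boldsymbol{R}$ must lie in that null space. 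Combined with the trace constraint, this gives
\begin{equation}\notag
\boldsymbol{R}=\frac{P_{\max}}{a}\boldsymbol{f}_1\boldsymbol{f}_1^H .
\end{equation}

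\textbf{Step 3: contradiction through feasibility.} Since $\lambda_1=1>0$, \eqref{eq16} gives $\gamma=\boldsymbol{f}_1^H\boldsymbol{R}\boldsymbol{f}_1=P_{\max}a$. Primal feasibility of the second constraint then requires
\begin{equation}\notag
\boldsymbol{f}_2^H\boldsymbol{R}\boldsymbol{f}_2=\frac{P_{\max}|c|^2}{a}\ge P_{\max}a,
\end{equation}
that is, $|c|\ge a$. This contradicts the non-degeneracy condition, so $\lambda_2>0$. The symmetric argument yields $\lambda_1>0$.

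The main obstacle is Step 3, and more precisely the need for the condition $|c|<\min\{a,b\}$. If $|c|\ge a$, for example when $\boldsymbol{f}_2=\alpha\boldsymbol{f}_1$ with $|\alpha|\ge1$, then beamforming purely toward user~1 already satisfies user~2. In that case $\lambda_2=0$ is a valid KKT point, and the proposition (and the formula \eqref{eq:gamma_exact}) fails. I would therefore make the assumption explicit and justify it either as holding almost surely for the random channel model or as the definition of the non-trivial operating regime. Steps 1--2 are routine consequences of $\boldsymbol{Z}\succeq\boldsymbol{0}$ and complementary slackness.
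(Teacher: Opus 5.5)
Your proof is correct for the statement under your non-degeneracy hypothesis $|c|<\min\{a,b\}$. It differs from the paper's Appendix~A at exactly the step that matters.

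The paper assumes $\lambda_1=0$ and writes $\boldsymbol{f}_2\boldsymbol{f}_2^H=\mu\boldsymbol{I}-\boldsymbol{Z}$. It orders the eigenvalues of $\boldsymbol{Z}$ as $v_1\ge\cdots\ge v_M$. It then tries to get a contradiction from stationarity and $\boldsymbol{Z}\succeq\boldsymbol{0}$ alone, by assigning the single nonzero eigenvalue of $\mu\boldsymbol{I}-\boldsymbol{Z}$ to $\mu-v_1$. That assignment is backwards. The nonzero eigenvalue must be the largest one, $\mu-v_M$, which gives $v_1=\cdots=v_{M-1}=\mu>v_M\ge 0$ and no contradiction. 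Indeed, $(\lambda_1,\lambda_2,\mu,\boldsymbol{Z})=(0,1,b,\,b\boldsymbol{I}-\boldsymbol{f}_2\boldsymbol{f}_2^H)$ satisfies every condition the paper invokes.

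Your route puts the contradiction where it actually lives. Complementary slackness forces $\mu=a$ and $\boldsymbol{R}=\frac{P_{\max}}{a}\boldsymbol{f}_1\boldsymbol{f}_1^H$, and then primal feasibility of the other user's constraint fails. This also shows the hypothesis cannot be dropped. Your counterexample is right: for $\boldsymbol{f}_2=2\boldsymbol{f}_1$ the fixed-$\boldsymbol{\phi}$ optimum is $P_{\max}a$, attained only by that rank-one $\boldsymbol{R}$. User~2's constraint is then strictly slack, so $\lambda_2=0$ is forced, and \eqref{eq:gamma_exact} returns $0$. So the paper's unconditional statement is false and its proof invalid, while you prove the correct conditional version.

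Two small corrections.

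First, Step~3 uses primal feasibility $\boldsymbol{f}_2^H\boldsymbol{R}\boldsymbol{f}_2\ge\gamma$. This is part of the KKT system but is not among \eqref{eq14}--\eqref{eq19}, so your phrase ``using only \eqref{eq14}--\eqref{eq19}'' should be amended. Say it explicitly, since this is precisely the ingredient missing from the paper's argument.

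Second, drop the suggestion that the hypothesis holds almost surely or generically. The set where $|\boldsymbol{f}_1^H\boldsymbol{f}_2|>\min\{\|\boldsymbol{f}_1\|^2,\|\boldsymbol{f}_2\|^2\}$ is open and nonempty, since it contains a neighborhood of $\boldsymbol{f}_2=2\boldsymbol{f}_1$. It therefore has positive probability under any everywhere-positive channel density. The condition must be imposed as a standing assumption that defines the nontrivial regime. The complementary case should be handled separately; there one has $\gamma^\star=P_{\max}\min\{a,b\}$, attained by beamforming to the weaker user.
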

\begin{proof}
	Please see Appendix A.
\end{proof}

Accordingly, \eqref{eq16} and \eqref{eq17} can be converted to
\begin{equation}\label{eq21}
	\begin{aligned}
	&\boldsymbol{f}_1^H\boldsymbol{R}\boldsymbol{f}_1=\gamma,\\
	&\boldsymbol{f}_2^H\boldsymbol{R}\boldsymbol{f}_2=\gamma,
	\end{aligned}
\end{equation}
which indicates that both users experience the same signal-to-interference-plus-noise ratio (SINR) at the optimal solution, as expected from the max‑min fairness criterion.

By right-multiplying by $\boldsymbol{R}$ and combining with \eqref{eq19}, we rewrite \eqref{eq14} as
\begin{equation}\label{eq22}
	\lambda_1\boldsymbol{f}_1\boldsymbol{f}_1^H\boldsymbol{R}+\lambda_2\boldsymbol{f}_2\boldsymbol{f}_2^H\boldsymbol{R}=\mu\boldsymbol{R}.
\end{equation}
We further transform \eqref{eq22}, then we have
\begin{equation}\label{eq23}
	\begin{aligned}
	&\lambda_1\boldsymbol{f}_1^H\boldsymbol{f}_1\boldsymbol{f}_1^H\boldsymbol{R}\boldsymbol{f}_1+\lambda_2\boldsymbol{f}_1^H\boldsymbol{f}_2\boldsymbol{f}_2^H\boldsymbol{R}\boldsymbol{f}_1=\mu\boldsymbol{f}_1^H\boldsymbol{R}\boldsymbol{f}_1,\\
	&\lambda_1\boldsymbol{f}_2^H\boldsymbol{f}_1\boldsymbol{f}_1^H\boldsymbol{R}\boldsymbol{f}_2+\lambda_2\boldsymbol{f}_2^H\boldsymbol{f}_2\boldsymbol{f}_2^H\boldsymbol{R}\boldsymbol{f}_2=\mu\boldsymbol{f}_2^H\boldsymbol{R}\boldsymbol{f}_2,\\
	&\lambda_1\boldsymbol{f}_1^H\boldsymbol{f}_1\boldsymbol{f}_1^H\boldsymbol{R}\boldsymbol{f}_2+\lambda_2\boldsymbol{f}_1^H\boldsymbol{f}_2\boldsymbol{f}_2^H\boldsymbol{R}\boldsymbol{f}_2=\mu\boldsymbol{f}_1^H\boldsymbol{R}\boldsymbol{f}_2,\\
	&\lambda_1\boldsymbol{f}_2^H\boldsymbol{f}_1\boldsymbol{f}_1^H\boldsymbol{R}\boldsymbol{f}_1+\lambda_2\boldsymbol{f}_2^H\boldsymbol{f}_2\boldsymbol{f}_2^H\boldsymbol{R}\boldsymbol{f}_1=\mu\boldsymbol{f}_2^H\boldsymbol{R}\boldsymbol{f}_1.
	\end{aligned}
\end{equation}
Let $c=\vert c\vert e^{j\delta_1}$, $x_{12}=\boldsymbol{f}_1^H\boldsymbol{R}\boldsymbol{f}_2=\vert x_{12}\vert e^{j\delta_2}$. With \eqref{eq21}, \eqref{eq23} is rewritten as
\begin{subequations}
	\begin{align}
		&\lambda_1a\gamma+\lambda_2cx_{12}^*=\mu\gamma,\label{eq24a}\\
		&\lambda_1c^*x_{12}+\lambda_2b\gamma=\mu\gamma,\label{eq24b}\\
		&\lambda_1ax_{12}+\lambda_2c\gamma=\mu x_{12},\label{eq24c}\\
		&\lambda_1c^*\gamma+\lambda_2bx_{12}^*=\mu x_{12}^*.\label{eq24d}
	\end{align}
\end{subequations}
We first rewrite \eqref{eq24a} and \eqref{eq24c} as 
\begin{subequations}\label{eq25}
	\begin{align}
		&\mu-\lambda_1a=\frac{\lambda_2\vert c\vert \vert x_{12}\vert e^{j(\delta_1-\delta_2)}}{\gamma},\label{eq25a}\\
		&\mu-\lambda_1a=\frac{\lambda_2\vert c\vert e^{j\delta_1}\gamma}{\vert x_{12}\vert e^{j\delta_2}}.\label{eq25b}
	\end{align}
\end{subequations}
Accordingly, we have
\begin{equation}\label{eq27}
		\vert x_{12}\vert^2=\gamma^2\rightarrow \vert x_{12}\vert=\gamma .
\end{equation}

According to the cyclic property of trace, combining \eqref{eq15} and \eqref{eq21}, \eqref{eq22} can be converted to
\begin{equation}\label{eq28}
	\begin{aligned}
		&\lambda_1\operatorname{Tr}(\boldsymbol{f}_1^H\boldsymbol{R}\boldsymbol{f}_1)+\lambda_2\operatorname{Tr}(\boldsymbol{f}_2^H\boldsymbol{R}\boldsymbol{f}_2)=\mu\operatorname{Tr}(\boldsymbol{R}),\\
		&\rightarrow\gamma=\mu P_{max}.
	\end{aligned}
\end{equation}
Obviously, $\gamma>0$, so $\mu=\frac{\gamma}{P_{max}}$ and $\mu$ is real number. Accordingly, from \eqref{eq25a} and Euler's formula, we have
\begin{equation}\label{eq29}
	\begin{aligned}
		&\mu = \lambda_1a+\lambda_2\vert c\vert  \cos(\delta_1-\delta_2),\\
	&\lambda_2\vert c\vert \sin(\delta_1-\delta_2)=0,\\
	&\rightarrow \delta_1-\delta_2=t\pi.
	\end{aligned}
\end{equation}
Then, $\mu = \lambda_1a+(-1)^t\lambda_2\vert c\vert $. Thus we have
\begin{equation}\label{eq30}
	\mu = \begin{cases}
		\lambda_1a-\lambda_2\vert c\vert , & \text{if } t \text{ is odd}, \\
		\lambda_1a+\lambda_2\vert c\vert, & \text{if } t \text{ is even}.
	\end{cases}
\end{equation}
Because $\mathcal{P}_1$ is a maximization problem, then $\gamma$ is need to be maximized, thus $\mu$ should be as large as possible. Accordingly, $t$ is even. Thus we obtain
\begin{equation}\label{eq31}
	\delta_1=\delta_2.
\end{equation}

According to \eqref{eq27} and \eqref{eq31}, \eqref{eq24a} and \eqref{eq24b} can be rewritten as
\begin{subequations}
	\begin{align}
		&\lambda_1a+\lambda_2\vert c\vert =\mu,\\
		&\lambda_1\vert c\vert+\lambda_2b=\mu.
	\end{align}
\end{subequations}
Combining with \eqref{eq15}, we can obtain
\begin{subequations}
	\begin{align}
		&\lambda_1^\star=\frac{b-\vert c\vert}{a+b-2\vert c\vert},\label{eq33a}\\
		&\lambda_2^\star=\frac{a-\vert c\vert}{a+b-2\vert c\vert},\label{eq33b}\\
		&\mu^\star=\frac{ab-\vert c\vert^2}{a+b-2\vert c\vert},\label{eq33c}
	\end{align}
\end{subequations}

Based on \eqref{eq28} and \eqref{eq33c}, we have
\begin{equation}
	\gamma^\star = \mu^\star P_{max}=\frac{ab-\vert c\vert^2}{a+b-2\vert c\vert}P_{max}.
\end{equation}

From \eqref{eq22}, we observe that the column space of $\boldsymbol{R}$ is contained in $\operatorname{span} \{\boldsymbol{f}_1,\boldsymbol{f}_2\}$. This follows from the fact that the left-hand side of \eqref{eq22} lies in $\operatorname{span} \{\boldsymbol{f}_1,\boldsymbol{f}_2\}$, and the right-hand side is a scaled version of $\boldsymbol{R}$. Consequently, the rank of $\boldsymbol{R}$ is at most two. Let $ \{\boldsymbol{u}_1,\boldsymbol{u}_2\}$ be an orthogonal basis for the two-dimensional subspace $\operatorname{span} \{\boldsymbol{f}_1,\boldsymbol{f}_2\}$, obtained via Gram-Schmidt orthogonalization, e.g.,
\begin{equation}
	\boldsymbol{u}_1=\frac{\boldsymbol{f}_1}{\Vert \boldsymbol{f}_1\Vert}, \boldsymbol{u}_2=\frac{\boldsymbol{f}_2-(\boldsymbol{f}_2^H\boldsymbol{u}_1)\boldsymbol{u}_1}{\Vert \boldsymbol{f}_2-(\boldsymbol{f}_2^H\boldsymbol{u}_1)\boldsymbol{u}_1\Vert}.
\end{equation}
Define the $M\times 2$ matrix $\boldsymbol{U}=[\boldsymbol{u}_1,\boldsymbol{u}_2]$, which satisfies $\boldsymbol{U}^H\boldsymbol{U}=\boldsymbol{I}_2$. Since the column space of $\boldsymbol{R}$ lies in $\operatorname{span} \{\boldsymbol{u}_1,\boldsymbol{u}_2\}$, there exists a $2\times 2$ Hermitian positive semi-definite matrix $\boldsymbol{Q}$ such that
\begin{equation}\label{eq36}
	\boldsymbol{R}=\boldsymbol{U}\boldsymbol{Q}\boldsymbol{U}^H,
\end{equation}
Assuming a Hermitian matrix $\boldsymbol{Q}=\left[\begin{array}{ll}q_{11} & q_{12} \\ q_{12}^* & q_{22}\end{array}\right]$. Substituting it into \eqref{eq36} yields
\begin{equation}\label{eq37}
	\boldsymbol{R}
	=q_{11}\boldsymbol{u}_1\boldsymbol{u}_1^H+q_{22}\boldsymbol{u}_2\boldsymbol{u}_2^H+q_{12}\boldsymbol{u}_1\boldsymbol{u}_2^H+q_{12}^*\boldsymbol{u}_2\boldsymbol{u}_1^H.
\end{equation}

Since $\{\boldsymbol{u}_1,\boldsymbol{u}_2\}$ forms an orthogonal basis for $\operatorname{span} \{\boldsymbol{f}_1,\boldsymbol{f}_2\}$, the vector $\boldsymbol{f}_2$ can be uniquely expressed as a linear combination of $\boldsymbol{u}_1$ and $\boldsymbol{u}_2$. Let 
\begin{equation}\label{eq38}
	\boldsymbol{f}_2=\alpha\boldsymbol{u}_1+\beta\boldsymbol{u}_2,
\end{equation}
where $\alpha$ and $\beta$ are complex coefficients. Taking the inner product with $\boldsymbol{u}_1$ and using orthogonality, we obtain
\begin{equation}\label{eq39}
	\alpha = \boldsymbol{u}_1^H\boldsymbol{f}_2-\beta\boldsymbol{u}_1^H\boldsymbol{u}_2=\frac{\boldsymbol{f}_1^H\boldsymbol{f}_2}{\Vert \boldsymbol{f}_1\Vert}=\frac{c}{\sqrt{a}}.
\end{equation}
The norm of $\boldsymbol{f}_2$ can be expressed as 
\begin{equation}\label{eq40}
	\Vert\boldsymbol{f}_2\Vert^2=\Vert\alpha\boldsymbol{u}_1\Vert^2+\Vert\beta\boldsymbol{u}_2\Vert^2,
\end{equation}
which, together with $\Vert \boldsymbol{f}_2\Vert^2=b$, yields $\beta^2=b-\frac{\vert c\vert^2}{a}$. Without loss of generality, we can take $\beta$ to be real and non-negative by absorbing any phase into $\boldsymbol{u}_2$ so that
\begin{equation} \label{eq41}
	\beta = \sqrt{b-\frac{\vert c\vert^2}{a}}.
\end{equation}
 
Based on \eqref{eq21}, \eqref{eq39} and \eqref{eq41}, we can obtain
\begin{subequations}\label{eq42}
	\begin{align}
		\gamma=&\boldsymbol{f}_1^H\boldsymbol{R}\boldsymbol{f}_1\notag\\
		=&q_{11}\boldsymbol{f}_1^H\boldsymbol{u}_1\boldsymbol{u}_1^H\boldsymbol{f}_1+q_{22}\boldsymbol{f}_1^H\boldsymbol{u}_2\boldsymbol{u}_2^H\boldsymbol{f}_1\notag\\
		&+q_{12}\boldsymbol{f}_1^H\boldsymbol{u}_1\boldsymbol{u}_2^H\boldsymbol{f}_1+q_{12}^*\boldsymbol{f}_1^H\boldsymbol{u}_2\boldsymbol{u}_1^H\boldsymbol{f}_1\notag\\
		=&q_{11}a,\label{eq42a}\\
		\gamma=&\boldsymbol{f}_2^H\boldsymbol{R}\boldsymbol{f}_2\notag\\
		=&q_{11}\boldsymbol{f}_2^H\boldsymbol{u}_1\boldsymbol{u}_1^H\boldsymbol{f}_2+q_{22}\boldsymbol{f}_2^H\boldsymbol{u}_2\boldsymbol{u}_2^H\boldsymbol{f}_2\notag\\
		&+q_{12}\boldsymbol{f}_2^H\boldsymbol{u}_1\boldsymbol{u}_2^H\boldsymbol{f}_2+q_{12}^*\boldsymbol{f}_2^H\boldsymbol{u}_2\boldsymbol{u}_1^H\boldsymbol{f}_2\notag\\
		=&q_{11}\vert \alpha\vert^2+q_{22}\vert\beta\vert^2+2\Re\{\alpha^*\beta q_{12}\},\label{eq42b}\\
		x_{12}=&\boldsymbol{f}_1^H\boldsymbol{R}\boldsymbol{f}_2\notag\\
		=&q_{11}\boldsymbol{f}_1^H\boldsymbol{u}_1\boldsymbol{u}_1^H\boldsymbol{f}_2+q_{22}\boldsymbol{f}_1^H\boldsymbol{u}_2\boldsymbol{u}_2^H\boldsymbol{f}_2\notag\\
		&+q_{12}\boldsymbol{f}_1^H\boldsymbol{u}_1\boldsymbol{u}_2^H\boldsymbol{f}_2+q_{12}^*\boldsymbol{f}_1^H\boldsymbol{u}_2\boldsymbol{u}_1^H\boldsymbol{f}_2\notag\\
		=&q_{11}\alpha\sqrt{a}+q_{12}\beta\sqrt{a}.\label{eq42c}
	\end{align}
\end{subequations}

\begin{equation}
	\begin{aligned}
	\gamma=&\boldsymbol{f}_1^H\boldsymbol{R}\boldsymbol{f}_1\\
	=&q_{11}\boldsymbol{f}_1^H\boldsymbol{u}_1\boldsymbol{u}_1^H\boldsymbol{f}_1+q_{22}\boldsymbol{f}_1^H\boldsymbol{u}_2\boldsymbol{u}_2^H\boldsymbol{f}_1\\
	&+q_{12}\boldsymbol{f}_1^H\boldsymbol{u}_1\boldsymbol{u}_2^H\boldsymbol{f}_1+q_{12}^*\boldsymbol{f}_1^H\boldsymbol{u}_2\boldsymbol{u}_1^H\boldsymbol{f}_1\\
	=&q_{11}a.
\end{aligned}
\end{equation}
According to \eqref{eq27} and \eqref{eq31}, we have $x_{12}=\gamma\frac{c}{\vert c\vert}$. Then, \eqref{eq42} can be written as
\begin{equation}\label{eq44}
	\begin{aligned}
		&q_{11}=\frac{\gamma}{a},\\
		&q_{11}\vert \alpha\vert^2+q_{22}\vert\beta\vert^2+2\Re\{\alpha^*\beta q_{12}\}=\gamma,\\
		&q_{11}\alpha\sqrt{a}+q_{12}\beta\sqrt{a}=\gamma\frac{c}{\vert c\vert}.
	\end{aligned}
\end{equation}
Solving this equation, we can obtain the solution
\begin{equation}\label{eq45}
	\begin{aligned}
	&q_{12}=\frac{\gamma}{\beta\sqrt{a}}(\frac{c}{\vert c\vert}-\frac{\alpha}{\sqrt{a}}),\\
	&q_{22}=\frac{\gamma}{\vert\beta\vert^2}\left[1-\frac{\vert\alpha\vert^2}{a}-2\Re\left\{\frac{\alpha^*}{\sqrt{a}}(\frac{c}{\vert c\vert}-\frac{\alpha}{\sqrt{a}})\right\}\right].
    \end{aligned}
\end{equation}
From \eqref{eq44} and \eqref{eq45}, we obtain the final expression for $\boldsymbol{R}$ below
\begin{equation}
	\begin{aligned}
	\boldsymbol{R}^\star&=\frac{\gamma^\star}{a}\boldsymbol{u}_1\boldsymbol{u}_1^H+2\Re\{\frac{\gamma^\star}{\beta\sqrt{a}}(\frac{c}{\vert c\vert}-\frac{c}{a})\boldsymbol{u}_1\boldsymbol{u}_2^H\}\\
	&+\frac{\gamma^\star}{\vert\beta\vert^2}\left[1-\frac{\vert\alpha\vert^2}{a}-2\Re\left\{\frac{\alpha^*}{\sqrt{a}}(\frac{c}{\vert c\vert}-\frac{\alpha}{\sqrt{a}})\right\}\right]\boldsymbol{u}_2\boldsymbol{u}_2^H
	.
\end{aligned}
\end{equation}

Accordingly, the sub-optimal capacity is 
\begin{equation}\label{eq47}
	\begin{aligned}
	&C^\star(\boldsymbol{\phi})=\log_2(1+\gamma)\\
	&=\log_2\left(1+P_{\max } \frac{a(\boldsymbol{\phi}) b(\boldsymbol{\phi})-|c(\boldsymbol{\phi})|^2}{a(\boldsymbol{\phi})+b(\boldsymbol{\phi})-2|c(\boldsymbol{\phi})|}\right),
\end{aligned}
\end{equation}

After obtaining the closed‑form expressions for $\boldsymbol{R}^\star$ and $\gamma^\star$ for a fixed RIS phase shifts, we now turn to optimizing $\boldsymbol{\phi}$ with given $\boldsymbol{R}$. Substituting the closed-form optimal $\boldsymbol{R}$ obtained in the previous subsection into the multicast capacity expression, the joint optimization over $\{\boldsymbol{R},\boldsymbol{\phi}\}$ reduces to the formulation over $\boldsymbol{\phi}$ only:
\begin{equation}\notag
		\mathcal{P}_3:
		\max_{\boldsymbol{\phi}}\quad  C(\boldsymbol{\phi})\quad\text { st. }
		\left|\phi_n\right|= 1, \forall n,
\end{equation}
where $C(\boldsymbol{\phi})$ is given by \eqref{eq47}. Since the constraint set is a Riemannian manifold, we adopt a manifold optimization approach\cite{alg1}.

Define the manifold:
\begin{equation}
	\mathcal{M}=\left\{\boldsymbol{\phi} \in \mathbb{C}^{N\times 1}:\left|\phi_n\right|=1, \forall n\right\}.
\end{equation}
Due to the monotonicity of the $\log_2(\cdot)$ function, the objective function $C(\boldsymbol{\phi})$ can be simplified as
\begin{equation}
	G(\boldsymbol{\phi})=P_{\max }  \frac{a(\boldsymbol{\phi}) b(\boldsymbol{\phi})-|c(\boldsymbol{\phi})|^2}{a(\boldsymbol{\phi})+b(\boldsymbol{\phi})-2|c(\boldsymbol{\phi})|},
\end{equation}
with $a(\boldsymbol{\phi})=\frac{\left\|\boldsymbol{F}_1^H \boldsymbol{\phi}+\boldsymbol{d}_1\right\|^2}{\sigma_1^2}$, $b(\boldsymbol{\phi})=\frac{\left\|\boldsymbol{F}_2^H \boldsymbol{\phi}+\boldsymbol{d}_2\right\|^2}{\sigma_2^2}$ and $c(\boldsymbol{\phi})=\frac{\left(\boldsymbol{F}_1^H \boldsymbol{\phi}+\boldsymbol{d}_1\right)^H\left(\boldsymbol{F}_2^H \boldsymbol{\phi}+\boldsymbol{d}_2\right)}{\sigma_1 \sigma_2}$. To apply Riemannian gradient ascent, we first compute the Euclidean gradient of $G(\boldsymbol{\phi})$ with respect to $\phi$. Using Wirtinger derivatives and the chain rule,
\begin{equation}
	\frac{\partial G}{\partial \boldsymbol{\phi}^*}=\frac{\partial G}{\partial a} \frac{\partial a}{\partial \boldsymbol{\phi}^*}+\frac{\partial G}{\partial b} \frac{\partial b}{\partial \boldsymbol{\phi}^*}+\frac{\partial G}{\partial c} \frac{\partial c}{\partial \boldsymbol{\phi}^*}+\frac{\partial G}{\partial c^*} \frac{\partial c^*}{\partial \boldsymbol{\phi}^*}.
\end{equation}
The partial derivatives are evaluated as
\begin{subequations}
	\begin{align}
		&\frac{\partial G}{\partial a}=P_{\max }  \frac{( b-|c|)^2}{(a+b-2|c|)^2},\\
		&\frac{\partial G}{\partial b}=P_{\max }  \frac{( a-|c|)^2}{(a+b-2|c|)^2},\\
		&\frac{\partial G}{\partial c}=\frac{\partial G}{\partial \vert c\vert}\frac{\partial \vert c\vert}{\partial c}=P_{\max }\frac{(a-\vert c\vert)(b-\vert c\vert)}{(a+b-2|c|)^2}\frac{c^*}{\vert c\vert},\\
		&\frac{\partial G}{\partial c^*}=\frac{\partial G}{\partial \vert c\vert}\frac{\partial \vert c\vert}{\partial c^*}=P_{\max }\frac{(a-\vert c\vert)(b-\vert c\vert)}{(a+b-2|c|)^2}\frac{c}{\vert c\vert}\\
		&\frac{\partial a}{\partial \boldsymbol{\phi}^*}=\frac{\boldsymbol{F}_1\boldsymbol{F}_1^H\boldsymbol{\phi}+\boldsymbol{F}_1\boldsymbol{d}_1}{\sigma_1^2},\\
		& \frac{\partial b}{\partial \boldsymbol{\phi}^*}=\frac{\boldsymbol{F}_2\boldsymbol{F}_2^H\boldsymbol{\phi}+\boldsymbol{F}_2\boldsymbol{d}_2}{\sigma_2^2},\\
		& \frac{\partial c}{\partial \boldsymbol{\phi}^*}=\frac{\boldsymbol{F}_1\boldsymbol{F}_2^H\boldsymbol{\phi}+\boldsymbol{F}_1\boldsymbol{d}_2}{\sigma_1\sigma_2},\\
		& \frac{\partial c^*}{\partial \boldsymbol{\phi}^*}=\frac{\boldsymbol{F}_2\boldsymbol{F}_1^H\boldsymbol{\phi}+\boldsymbol{F}_2\boldsymbol{d}_1}{\sigma_1\sigma_2}.
	\end{align}
\end{subequations}
Substituting these into the chain rule yields the Euclidean gradient $\nabla_{\boldsymbol{\phi}} G$:
\begin{equation}
	\begin{aligned}
	\nabla_{\boldsymbol{\phi}} G=&P_{\max }  \frac{( b-|c|)^2}{(a+b-2|c|)^2}\frac{\boldsymbol{F}_1\boldsymbol{F}_1^H\boldsymbol{\phi}+\boldsymbol{F}_1\boldsymbol{d}_1}{\sigma_1^2}\\
	&+P_{\max }  \frac{( a-|c|)^2}{(a+b-2|c|)^2}\frac{\boldsymbol{F}_2\boldsymbol{F}_2^H\boldsymbol{\phi}+\boldsymbol{F}_2\boldsymbol{d}_2}{\sigma_2^2}\\
	&+P_{\max }\frac{(a-\vert c\vert)(b-\vert c\vert)}{(a+b-2|c|)^2}\frac{c^*}{\vert c\vert}\frac{\boldsymbol{F}_1\boldsymbol{F}_2^H\boldsymbol{\phi}+\boldsymbol{F}_1\boldsymbol{d}_2}{\sigma_1\sigma_2}\\
	&+P_{\max }\frac{(a-\vert c\vert)(b-\vert c\vert)}{(a+b-2|c|)^2}\frac{c}{\vert c\vert}\frac{\boldsymbol{F}_2\boldsymbol{F}_1^H\boldsymbol{\phi}+\boldsymbol{F}_2\boldsymbol{d}_1}{\sigma_1\sigma_2}.
	\end{aligned}
\end{equation}

On the manifold $\mathcal{M}$, the Riemannian gradient is the projection of the Euclidean gradient onto the tangent space:
\begin{equation}
	\operatorname{grad}_{\mathcal{M}} G(\boldsymbol{\phi})=\nabla_{\boldsymbol{\phi}} G+\Re\{\nabla_{\boldsymbol{\phi}} G \odot \boldsymbol{\phi}^*\} \odot\boldsymbol{\phi},
\end{equation}
where $\odot$ denotes element-wise multiplication. The $n$-th component is 
\begin{equation}
	\left[\operatorname{grad}_{\mathcal{M}} G\right]_n=\left[\nabla_{\boldsymbol{\phi}} G\right]_n+\Re\left(\left[\nabla_{\boldsymbol{\phi}} G\right]_n \cdot \phi_n^*\right) \cdot \phi_n.
\end{equation}

Finally, we update $\boldsymbol{\phi}$ using Riemannian gradient ascent with step size $\tau$ and the retraction $R_{\boldsymbol{\phi}}(\boldsymbol{\nu})=\frac{\boldsymbol{\phi}+\boldsymbol{\nu}}{\Vert\boldsymbol{\phi}+\boldsymbol{\nu}\Vert}$, which ensures that the updated point stays on the manifold:
\begin{equation}
	\boldsymbol{\phi}^{(t+1)}=\mathcal{R}_{\boldsymbol{\phi}^{(t)}}\left(\tau \operatorname{grad}_{\mathcal{M}} G\left(\boldsymbol{\phi}^{(t)}\right)\right).
\end{equation}
This iterative procedure converges to a sub-optimal RIS phase shifts $\boldsymbol{\phi}^{\star}$.

\paragraph*{Computational complexity}
The overall complexity of the AO algorithm consists of two parts: the closed-form computation of the covariance matrix $\boldsymbol{R}$, and the RIS phase optimization using manifold gradient ascent. The computational complexity of constructing the covariance matrix $\boldsymbol{R}$ is dominated by the computation of $\boldsymbol{f}_1$ and $\boldsymbol{f}_2$, which requires $\mathcal{O}(MN)$ operations, and the construction of the outer products, which requires $\mathcal{O}(M^2)$ operations. Hence, the covariance matrix update costs $\mathcal{O}(MN + M^2)$. The computational complexity of the manifold-based algorithm for optimizing the RIS phase shifts is dominated by the computation of the Euclidean gradient, which requires $\mathcal{O}(MN)$ operations per iteration. The projection onto the tangent space and the retraction step each require $\mathcal{O}(N)$ operations. Thus, the manifold algorithm requires $\mathcal{O}(T_{\text{inner}} \cdot MN)$ operations per outer iteration, where $T_{\text{inner}}$ is the number of inner iterations. With $T_{\text{outer}}$ outer iterations, the total complexity is $\mathcal{O}\left(T_{\text{outer}} \left( T_{\text{inner}} \cdot MN + MN + M^2 \right) \right) = \mathcal{O}\left(T_{\text{outer}} T_{\text{inner}} \cdot MN\right)$, where the term $M^2$ is negligible when $N \gg M$.

\subsection{Asymptotic analysis}
\begin{assumption}[Quantified near-field distance regime]
	There exist constants $0<\rho_L\le \rho_U<\infty$ and $\psi_{\min}>0$, independent of $N$, such that
	\begin{equation}
		\rho_L d\sqrt{N} \le r_1 \le \rho_U d\sqrt{N},
	\end{equation}
	and
	\begin{equation}
		|\Psi_1| = |\sin\vartheta_1\cos\varphi_1| \ge \psi_{\min},
		\Phi_1 = \sin\vartheta_1\sin\varphi_1,
		\Omega_1 = \cos\vartheta_1.
	\end{equation}
\end{assumption}

\begin{remark}
	Assumption~1 is a quantified version of the usual statement $r_1=\Theta(d\sqrt{N})$, supplemented by a nondegeneracy condition on the directional factor $\Psi_1$. Its advantage is that the constants appearing in the lower-bound argument, especially those induced by the stationary-phase approximation\cite{stein1993harmonic}, can be made explicitly independent of $N$. Moreover, since both $r_1$ and the RIS aperture diameter scale as $\Theta(d\sqrt{N})$, the variation of the near-field channel amplitudes across RIS elements remains uniformly bounded. The three quantities $(\Phi_1,\Omega_1,\Psi_1)$ are the direction-cosine components of user~1 and satisfy
	\begin{equation}
		\Phi_1^2 + \Omega_1^2 + \Psi_1^2 = 1.
	\end{equation}
	Note that $|\Psi_1|\ge \psi_{\min}>0$ is an additional geometric nondegeneracy assumption rather than an automatic consequence of $\vartheta_1\in(0,\pi)$ and $\varphi_1\in(0,\pi)$. Its role is to exclude the degenerate stationary-phase case $\Psi_1=0$; only the magnitude $|\Psi_1|$ matters in the asymptotic scaling argument.
\end{remark}

\begin{assumption}[Effective RIS reflection geometry]
	The stationary point associated with the dominant LoS cascaded path from the BS to the RIS and then to the far-field user lies inside the RIS aperture.
\end{assumption}

\begin{remark}
	Assumption~2 ensures that the dominant reflected LoS component for the far-field user is effectively captured by the RIS. If it is violated, the corresponding reflected contribution becomes negligible, and the RIS no longer provides an effective specular path in this scenario.
\end{remark}

Consistent with the channel model in Section~II, we adopt the following specialization for asymptotic analysis: the BS-user links have only NLoS paths, i.e., $L_{k,\mathrm{BU}}=3$ for $k\in\{1,2\}$. The BS-RIS link has one LoS path plus three NLoS paths, i.e., $L_{\mathrm{BR}}=4$. The RIS-near-field-user channel is a single spherical-wave path, and the directional factor in its amplitude is understood through $|\Psi_k|$. The RIS-far-field-user channel is a single LoS path, i.e., $L_{2,\mathrm{RU}}=1$. The asymptotic proof only requires that the far-field links contain finitely many bounded-gain paths and that the near-field channel follows the spherical-wave model, so the proof remains valid under this specialization.

\begin{proposition}\label{prop:main}
	Suppose that the BS-RIS link and all direct links contain finitely many propagation paths, and that all associated path gains remain bounded independently of $M$ and $N$.
	
	Under Assumption~1, there exists a constant $\kappa_U>0$, independent of $M$, $N$, and $P_{\max}$, such that
	\begin{equation}
		C \le \log_2\!\left(1+\kappa_U P_{\max}MN\right).
	\end{equation}
	Furthermore, under Assumptions~1--2 and under the standard large-aperture stationary-phase approximation used in the lower-bound argument for the dominant LoS oscillatory sum, there exist a constant $\kappa_L>0$ and an integer $N_0$, both independent of $M$, $N$, and $P_{\max}$, such that for every $M\ge 1$ and all $N\ge N_0$,
	\begin{equation}
		C \ge \log_2\!\left(1+\kappa_L P_{\max}MN\right).
	\end{equation}
	Consequently,
	\begin{equation}
		C = \Theta\!\bigl(\log_2(MN)\bigr).
	\end{equation}
\end{proposition}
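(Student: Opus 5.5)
For the upper bound I will not use the exact fixed-phase SINR of Proposition~\ref{prop:fixedphi}. I will use the elementary fact that the max--min SINR is at most the SINR of either user alone. For any feasible $\boldsymbol{R}$ and $\boldsymbol{\phi}$,
\[
\min_k \boldsymbol{f}_k^H\boldsymbol{R}\boldsymbol{f}_k\le \boldsymbol{f}_1^H\boldsymbol{R}\boldsymbol{f}_1\le P_{\max}\|\boldsymbol{f}_1\|^2 .
\]
It then suffices to show $\|\boldsymbol{F}_1^H\boldsymbol{\phi}+\boldsymbol{d}_1\|^2\le \kappa MN$ uniformly over $|\phi_n|=1$. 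Any user could serve here; I pick the near-field user because Assumption~1 controls its amplitudes.

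By the triangle inequality, $\|\boldsymbol{F}_1^H\boldsymbol{\phi}+\boldsymbol{d}_1\|\le \|\boldsymbol{H}^H\operatorname{diag}(\boldsymbol{h}_1)\boldsymbol{\phi}\|+\|\boldsymbol{d}_1\|$. Since $\boldsymbol{d}_1$ is a finite sum of unit-norm steering vectors scaled by $\sqrt{M/L}$ with bounded gains, $\|\boldsymbol{d}_1\|^2=O(M)$. For the cascaded term, write $\boldsymbol{H}^H=\sqrt{MN/L_{BR}}\sum_l\beta_l^*\boldsymbol{a}_{B,l}\boldsymbol{a}_{R,l}^H$ and bound each path separately. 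Using $\|\boldsymbol{a}_{B,l}\|=1$, the $l$-th path contributes at most $\sqrt{MN}\,|\beta_l|\,|\boldsymbol{a}_{R,l}^H\operatorname{diag}(\boldsymbol{h}_1)\boldsymbol{\phi}|$. By Cauchy--Schwarz and $\|\boldsymbol{a}_{R,l}\|=1$, this is at most $\sqrt{MN}|\beta_l|\,\|\boldsymbol{h}_1\|$.

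Next I bound $\|\boldsymbol{h}_1\|^2$. Under Assumption~1 we have $r_{n_y,n_z,1}\asymp r_1$ uniformly over the aperture, because both the aperture and $r_1$ are $\Theta(d\sqrt N)$ and $r_{n_y,n_z,1}\ge u_{1,x}=r_1|\Psi_1|\ge r_1\psi_{\min}$. Hence each $|h_{n,1}|^2\le A/(4\pi \psi_{\min}^3 r_1^2)=O(1/N)$, and therefore $\|\boldsymbol{h}_1\|^2=O(1)$. Squaring and summing the finitely many paths gives $O(MN)$, and the constant $\kappa_U$ collects the bounds on $\max|\beta_l|$, $L_{BR}$, $A/d^2$, $\rho_L$ and $\psi_{\min}$.

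For the lower bound I will exhibit a feasible pair $(\boldsymbol{\phi},\boldsymbol{R})$ whose two SINRs are both $\Omega(MN)$. I take $\boldsymbol{R}=P_{\max}\boldsymbol{a}_{B,0}\boldsymbol{a}_{B,0}^H$, beamforming along the BS-RIS LoS AoD. Each user then receives $P_{\max}|\boldsymbol{\phi}^H\operatorname{diag}(\boldsymbol{h}_k)^H\boldsymbol{H}\boldsymbol{a}_{B,0}+\boldsymbol{d}_k^H\boldsymbol{a}_{B,0}|^2$. The dominant term is $\sqrt{MN/L_{BR}}\beta_0\,\boldsymbol{\phi}^H\operatorname{diag}(\boldsymbol{h}_k)^H\boldsymbol{a}_{R,0}$, and the remaining paths are lower-order cross terms since $\boldsymbol{a}_{B,l}^H\boldsymbol{a}_{B,0}=O(1/\sqrt{M})$ for distinct angles.

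For $\boldsymbol{\phi}$ I split the RIS into two halves (or interleave the elements) and phase-align each half to one user. This gives each user a coherent sum of order $\sum_{n\in\mathcal S_k}|h_{n,k}|/\sqrt N$. For user~1 this is $\Theta(\sqrt N)\cdot\Theta(1/\sqrt N)\cdot\sqrt N$ in aggregate, so the received power is $\Theta(MN)\cdot\Theta(1)$ once the $1/\sqrt N$ of $\boldsymbol{a}_{R,0}$ is accounted for.

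The main obstacle is that the phases tuned to one user turn into an oscillatory sum $\sum_n h_{n,k}^* e^{j\theta_n}[\boldsymbol{a}_{R,0}]_n$ at the other user. In particular, the near-field quadratic phase is seen on the far-field user's half, and that sum could cancel the coherent contribution. This is exactly where I will invoke the stationary-phase approximation together with Assumption~2. The aligned half's contribution is bounded below directly, with magnitude comparable to its amplitude sum. The cross contribution from the other half is a two-dimensional discrete oscillatory sum with a quadratic phase of curvature $\propto d^2/(\lambda r_1)$. By stationary phase, using nondegeneracy from $|\Psi_1|\ge\psi_{\min}$ and a stationary point inside the aperture from Assumption~2, its modulus is bounded by a fixed fraction of the coherent term for $N\ge N_0$. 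A random sign or phase-offset choice between the halves can also make the cross term add non-destructively.

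The direct links contribute $O(\sqrt M)$ in amplitude, which is dominated for $N\ge N_0$; alternatively I can choose a common phase rotation of one half so they add constructively. Combining these gives $\min_k\gamma_k\ge\kappa_L P_{\max}MN$, and sandwiching yields $C=\Theta(\log_2(MN))$.
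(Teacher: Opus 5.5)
Your upper bound is the paper's argument. You relax to user~1, use $\boldsymbol f_1^H\boldsymbol R\boldsymbol f_1\le P_{\max}\|\boldsymbol f_1\|^2$, and bound the cascaded term by a constant times $\sqrt{MN}\,\|\boldsymbol h_1\|$. Your derivation of $\|\boldsymbol h_1\|^2=O(1)$ from $r_{n,1}\ge|u_{1,x}|=r_1|\Psi_1|\ge\psi_{\min}r_1$ is more careful than the paper's. The paper simply asserts $\|\boldsymbol h_1\|\le\bar h_1$, and it bounds amplitude ratios via $r_1\le r_{n,1}$, which fails for any off-broadside user.

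\emph{Lower bound: a different route.} The paper focuses the whole aperture on the near-field user, $[\boldsymbol\phi_0]_n\propto e^{j2\pi r_{n,1}/\lambda}[\boldsymbol a_{R,0}]_n^{*}$. This leaves user~2 with a defocused quadratic-phase sum $y_0$, so the paper needs a stationary-phase \emph{lower} bound on $|y_0|$; this is where $|\Psi_1|\ge\psi_{\min}$ and Assumption~2 enter. It then combines the two users with $\boldsymbol w_0\propto\boldsymbol u_1+\boldsymbol u_2$. Your aperture split with $\boldsymbol R=P_{\max}\boldsymbol a_{B,0}\boldsymbol a_{B,0}^H$ instead gives each user a coherent half-aperture sum: order $MN$ for user~1 and order $MN^2$ for user~2. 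You therefore only need \emph{upper} bounds on the inter-half sums. An $O(\sqrt N)$ bound on the quadratic-phase lattice sum makes them smaller than the aligned terms by a factor $O(N^{-1/2})$ for both users. Assumption~2 plays no role here: a stationary point inside the aperture is the worst case for an upper bound, not a prerequisite.

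Your phase-offset remark removes even that estimate. Rotate the phases on $\mathcal S_1$ by $e^{j\alpha}$; user~$k$ then receives $|X_k+e^{-j\alpha}Y_k|^2=|X_k|^2+|Y_k|^2+2\Re(e^{-j\alpha}Y_kX_k^{*})$. The two sets $\{\alpha:\Re(e^{-j\alpha}Y_kX_k^{*})\ge 0\}$ are each a closed half-circle (or the whole circle), so they intersect. One $\alpha$ therefore gives both users at least $\max\{|X_k|^2,|Y_k|^2\}$. A $\pm1$ sign alone does not suffice, since the two users may prefer opposite signs. Overall you lose a constant factor, which is irrelevant for $\Theta(\log_2(MN))$. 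In exchange the lower bound no longer depends on the stationary-phase approximation or on Assumption~2.

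\emph{The step that fails as justified.} You discard the non-LoS BS--RIS paths because $\boldsymbol a_{B,l}^H\boldsymbol a_{B,0}=O(1/\sqrt M)$. The proposition claims the lower bound for every $M\ge 1$ with $N_0$ independent of $M$. At $M=1$ this inner product equals $1$, and for a UPA that grows along one dimension only it need not decay at all.

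The decay has to come from the RIS side. On each aligned half, path $l\neq 0$ contributes $\beta_l(\boldsymbol a_{B,l}^H\boldsymbol a_{B,0})$ times a sum with slowly varying amplitude and a nonzero linear phase (the difference of the RIS-side spatial frequencies of paths $l$ and $0$). Summation by parts bounds that sum by $O(N^{-1/2})$ relative to the aligned $l=0$ term. The bound is uniform in $M$ because $|\boldsymbol a_{B,l}^H\boldsymbol a_{B,0}|\le 1$. It does require the RIS-side angles of the BS--RIS paths to be distinct and non-aliased, which neither you nor the paper states. The paper's step ``retaining only the dominant LoS path gives $\ge$'' glosses over exactly the same issue.
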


\begin{remark}
	The upper bound in Proposition~\ref{prop:main} is fully rigorous. The matching lower bound uses the standard large-aperture stationary-phase approximation for the dominant RIS-side oscillatory sum generated by the near-field compensation phase profile. The appendix proof is written directly in the covariance-matrix form of Problem $\mathcal P_2$, so it does not require assuming that the optimal covariance matrix is rank-one.
\end{remark}
\noindent\textit{Proof:} See Appendix~B.

\section{Simulation results}
The simulation parameters are set as follows. The carrier frequency is 28 GHz with a bandwidth of 100 MHz, and the noise power is $-94$ dBm. The BS is equipped with $8 \times 8$ antennas, and the RIS consists of $25 \times 25$ reflecting elements. The precision threshold for convergence is set to $10^{-4}$. All simulation results are averaged over $100$ independent channel realizations.

For the channel model, the direct BS–user link contains three NLoS paths, while the cascaded BS–RIS link comprises one LoS path and three NLoS paths. The complex gain of each path follows $\mathcal{CN}(0, 10^{-0.1 \cdot PL(dist)})$, where $PL(dist)$ is the distance-dependent path loss. The path loss is modeled as $PL(dist) = a + 10b \log_{10}(dist) + \xi$, with $\xi \sim \mathcal{N}(0, \sigma_{\xi}^2)$ representing shadow fading. For LoS paths, we set $a = 61.4$, $b = 2$, and $\sigma_{\xi} = 5.8$ dB; for NLoS paths, the parameters are $a = 72$, $b = 2.92$, and $\sigma_{\xi} = 8.7$ dB \cite{ref4}.

The BS is located at $(0, 0, 180)$ m, and the RIS is placed at the origin $(0, 0, 0)$. Both users are positioned along the same fixed direction, which is given by $\vartheta = \pi/3, \varphi = 2\pi/3$. The near-field user distance scales as $r_{\text{near}} = \rho d \sqrt{N}$ with $\rho = 27$, where $d = \lambda/2$. This ensures $r_{\text{near}} = \Theta(d\sqrt{N})$ as required by Assumption 1. The far-field user is placed at a fixed distance $r_{\text{far}} = 180$ m, which exceeds the maximum Rayleigh distance of the RIS which is approximately $167$ m for $N = 125^2$, guaranteeing that it always operates in the far-field region.

In addition, to verify the effectiveness of the proposed scheme and algorithm, we adopts the following comparative schemes and algorithms:
\begin{enumerate}
	\item NoRIS: the RIS is absent.
	\item FF-FF: All users are in the far-field of the RIS.
	\item NF-NF: All users are in the near-field of the RIS.
	\item NF-FF: In the proposed hybrid-field scenario, near-field and far-field users coexist.
	\item Fit: Linear fitting of the proposed scheme to extract the scaling slope $\log_2(MN)$.
	\item SCA-based AO: Alternating optimization for $\mathcal{P}_1$ with RIS phase shifts optimized via successive convex approximation (SCA)\cite{sca} and the convex subproblem corresponding to $\boldsymbol{R}$ solved by CVX\cite{cvx}.
\end{enumerate}

\begin{figure}[h]
	\centering
	\includegraphics[width=0.40\textwidth]{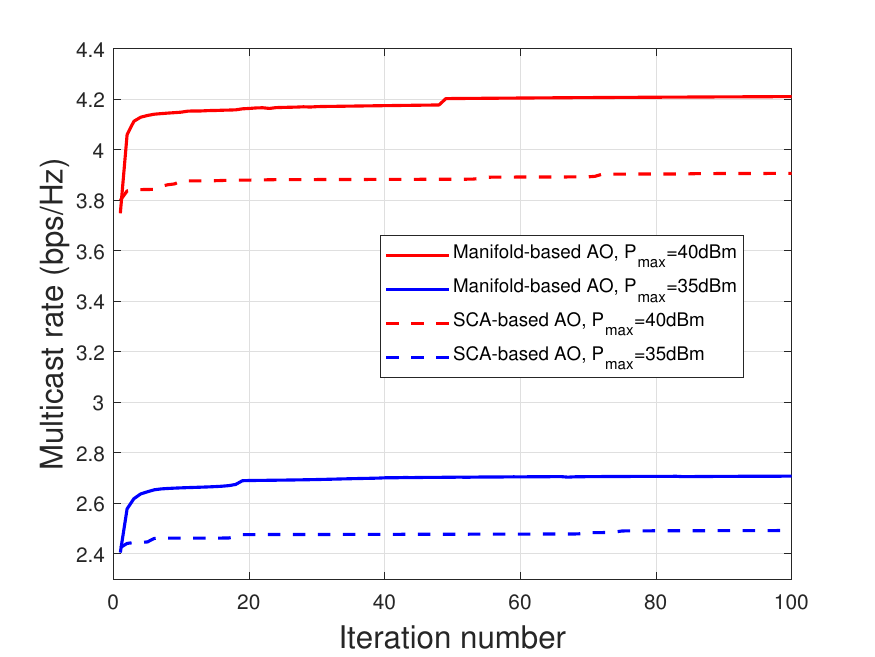}
	\caption{Convergence performance.} 
	\label{fig2}
\end{figure}
Fig. \ref{fig2} illustrates the convergence behavior of the proposed manifold-based AO algorithm compared with the SCA-based AO method. The proposed algorithm demonstrates rapid initial convergence and exhibits a monotonic non-decreasing trend. Notably, step-like increments appear in later iterations, such as around the $49$-th iteration, which validates the effectiveness of the alternating optimization strategy in further refining the solution. Throughout the process, the proposed manifold-based AO consistently outperforms the SCA-based approach. Specifically, at a transmit power of 40 dBm, it achieves a stable multicast rate of approximately 4.2 bps/Hz, significantly higher than the benchmark. Furthermore, increasing the transmit power from 35 dBm to 40 dBm yields a substantial performance gain, demonstrating the algorithm's ability to exploit power resources for capacity enhancement.

\begin{figure}[h]
	\centering
	\includegraphics[width=0.40\textwidth]{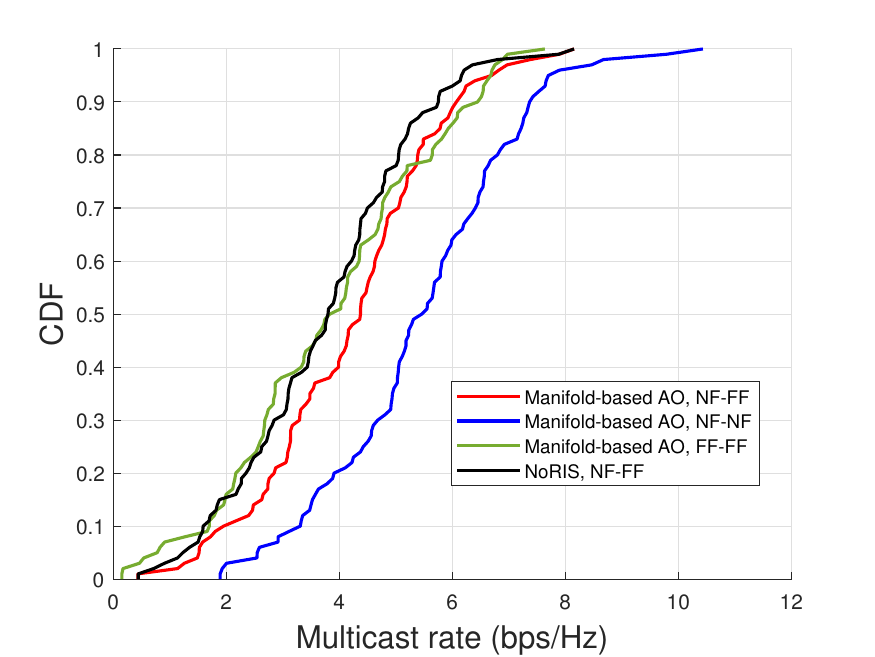}
	\caption{The CDF of multicast rate.} 
	\label{fig3}
\end{figure}
Fig. \ref{fig3} compares the cumulative distribution function (CDF) of the multicast rate for four scenarios. The NF-NF case achieves the highest rates across the entire range, benefiting from full RIS focusing gain. The NF-FF case exhibits a higher median rate than the FF-FF case, thanks to the RIS focusing that assists the near-field user. However, for the CDF values between 0.79 and 0.95, FF-FF overtakes NF-FF. This crossover occurs because the near–far field compromise caps the peak rate of NF-FF, while FF-FF can fully exploit the $N^2$ array gain \cite{intro11} in favorable channel realizations. All RIS-assisted scenarios significantly outperform the NoRIS baseline, confirming the necessity of RIS deployment.

\begin{figure}[h]
	\centering
	\includegraphics[width=0.40\textwidth]{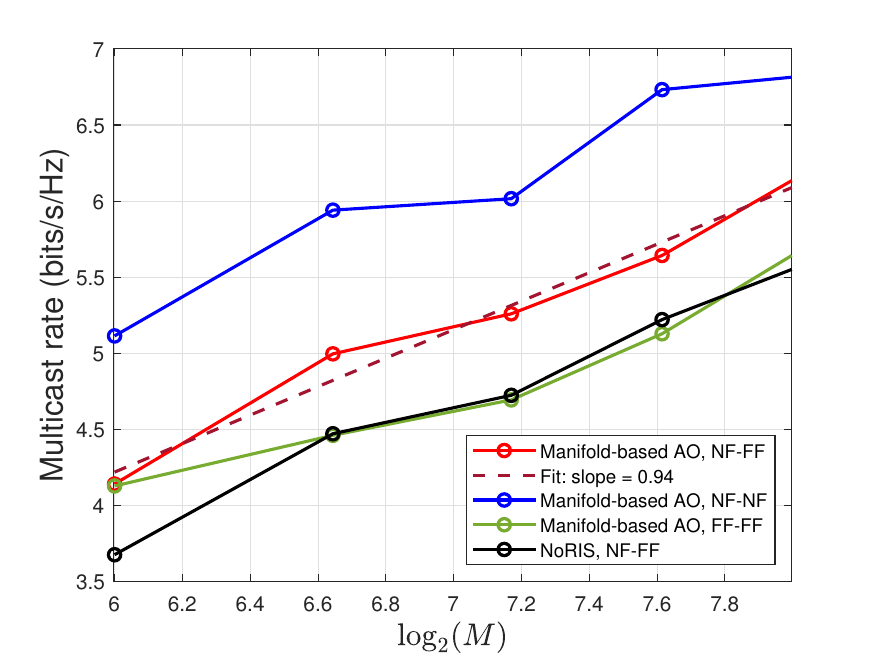}
	\caption{Multicast rate versus transmit antenna number.} 
	\label{fig4}
\end{figure}
Fig. \ref{fig4} illustrates the multicast rate versus the number of transmit antennas $M$ on a logarithmic scale. It is observed that the multicast rate increases with $M$ for all RIS-assisted schemes, demonstrating the effectiveness of increasing antennas to enhance system capacity. Specifically, the manifold-based AO with the NF-NF configuration achieves the highest rate, significantly outperforming all other schemes. The proposed manifold-based AO with the NF-FF configuration shows moderate performance, is superior to the FF-FF and NoRIS benchmarks. Moreover, the linear fit for the NF-FF scheme yields a slope of 0.94, which is close to the theoretical value of 1, validating the scalability of the proposed algorithm and confirming the $\Theta(\log_2(MN))$ scaling law even under the hybrid-field condition.

\begin{figure}[h]
	\centering
	\includegraphics[width=0.40\textwidth]{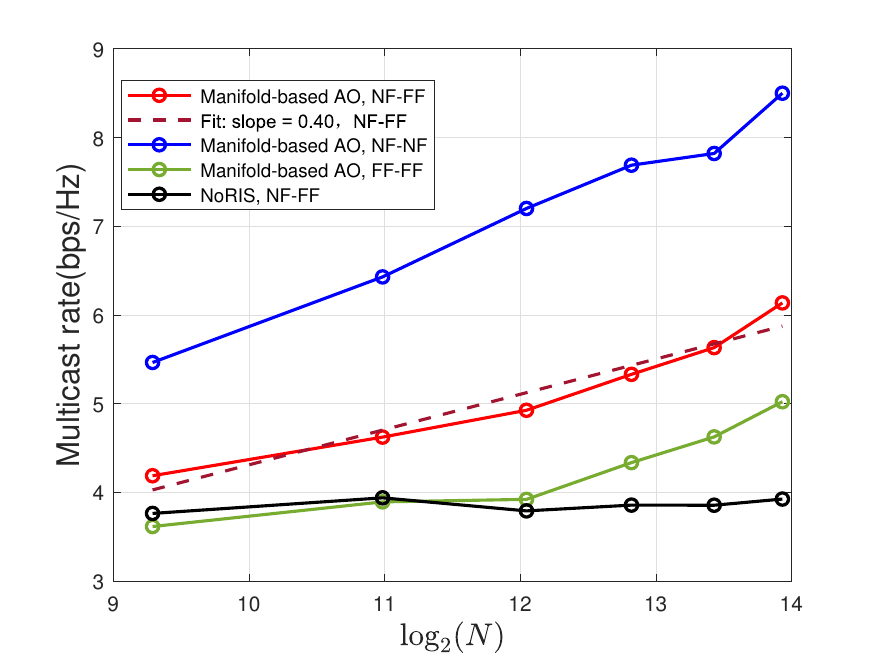}
	\caption{Multicast rate versus RIS element number.} 
	\label{fig5}
\end{figure}
Fig.~\ref{fig5} illustrates the multicast capacity as a function of $\log_2(N)$ with fixed $M$. Theoretically, for fixed $M$, the capacity satisfies $C = \Theta(\log_2(N))$, i.e., the asymptotic slope is $1$. Linear fitting of the three RIS-assisted curves yields slopes of $0.63$ for NF-NF, $0.40$ for NF-FF, and $0.28$ for FF-FF, respectively. All three slopes fall below the theoretical value of $1$, primarily because the asymptotic order analysis in this chapter pertains to the large-scale asymptotic regime, whereas the simulations over a finite range of $N$ remain in the pre-asymptotic regime. The decrease in slope from $0.63$ (NF-NF) to $0.40$ (NF-FF) reflects the additional gain loss induced by the RIS phase trade-off in hybrid-field scenarios, where the RIS must simultaneously serve both near-field and far-field users and thus cannot perfectly align with either user. The further decline in slope to $0.28$ (FF-FF) reveals the gain limitation of pure far-field beamforming under finite simulation conditions. Despite the slopes being lower than the theoretical value, the multicast capacity in the NF-FF scenario increases monotonically with $N$ and consistently exceeds the no-RIS baseline, thereby validating the practical effectiveness of the hybrid-field design.

\begin{figure}[h]
	\centering
	\includegraphics[width=0.40\textwidth]{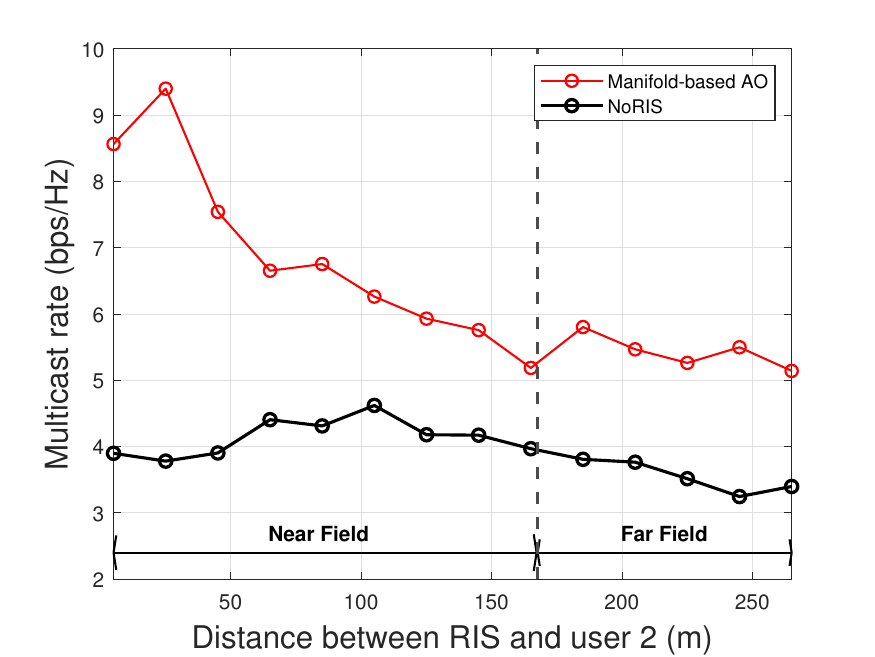}
	\caption{Multicast rate versus the distance between RIS and user $2$.} 
	\label{fig6}
\end{figure}
Fig.~\ref{fig6} illustrates the multicast capacity as a function of the distance between user~2 and the RIS. In this simulation, user~1 is fixed in the near-field region of the RIS, while the distance of user~2 increases from 5~m to 265~m, causing user~2 to transition gradually from the near-field to the far-field region of the RIS. When user~2 is located in the near-field, the RIS can achieve energy focusing for this user, resulting in a high multicast rate that reaches approximately 9.4~bps/Hz at 25~m. As user~2 enters the far-field region, the focusing gain gradually diminishes, and the multicast rate decreases to a stable level of approximately 5.1~bps/Hz in the far-field, yet remains significantly higher than the no-RIS baseline. These results validate the effectiveness of the proposed hybrid-field design, which is reflected in the following three aspects. First, as user~2 moves from the near-field to the far-field, the multicast rate remains substantially higher than the NoRIS baseline, indicating that the RIS can simultaneously serve both near-field and far-field users with a single phase profile. Second, the multicast rate is significantly higher when user~2 is in the near-field, demonstrating the practical gain offered by near-field focusing. Third, the multicast rate decreases and stabilizes as user~2 enters the far-field, suggesting that system performance in multicast scenarios is primarily limited by the far-field user, which aligns with the fundamental characteristic that the multicast rate is constrained by the weakest user. Collectively, these results demonstrate that the proposed hybrid-field design can effectively leverage both near-field focusing and far-field beamforming capabilities of the RIS to achieve reliable multicast transmission in hybrid-field scenarios.

\section{Conclusion}
The paper establishes the fundamental multicast capacity limit for XL-RIS-assisted mmWave communications in hybrid near-field and far-field scenarios. For the canonical two-user case, we derive a closed-form expression for the optimal covariance matrix and propose a manifold optimization algorithm for RIS phase shift optimization. We prove that the asymptotic order of the multicast capacity is $\Theta(\log_2(MN))$, and that this order is tight. Numerical results verify the tightness of the derived bounds and quantify the impact of $M$, $N$, and distance on the achievable multicast rate. 

\begin{appendices}
\section{PROOF of PROPOSITION \ref{prop:kkt}}
\begin{proof}
	If $\lambda_1=0$, then $\lambda_2=1$ based on \eqref{eq15}. Thus, \eqref{eq14} can be rewritten as
	\begin{equation}\label{eq20}
		\lambda_2\boldsymbol{f}_2\boldsymbol{f}_2^H=\mu\boldsymbol{I}-\boldsymbol{Z}.
	\end{equation} 
	The left side of \eqref{eq20} is rank-one constraint because $\boldsymbol{f}_1$ is not zero. If $\mu>0$, $\mu\boldsymbol{I}$ is full-rank. Assuming the eigenvalues of $\boldsymbol{Z}$ is $v_1\geq v_2\geq\cdots\geq v_M \geq 0$, then the eigenvalues of  $\mu\boldsymbol{I}-\boldsymbol{Z}$ is $\mu-v_1,\mu-v_2,\cdots,\mu-v_M$. Because the rank of $\mu\boldsymbol{I}-\boldsymbol{Z}$ is $1$, thus $\mu-v_1\neq 0, \mu-v_2=0,\cdots,\mu-v_M=0$. Accordingly, we have $v_1\geq v_2= \cdots= v_M = \mu$. Moreover, the left side $\lambda_2\boldsymbol{f}_2\boldsymbol{f}_2^H$ is positive semi-definite, thus its nonzero eigenvalue is positive, i.e., $\mu-v_1\geq 0$. Combining $v_1\geq \mu$, we have $v_1=\mu$. It contradicts $\mu\neq v_1$, therefore the assumption $\lambda_1=0, \lambda_2=1$ is invalid. If $\mu=0$, then the right side of \eqref{eq20} becomes $-\boldsymbol{Z}$, which is negative semi-definite. It is not equivalent to the conclusion that it is positive semi-definite. Thus, the assumption $\lambda_1=0, \lambda_2=1$ is invalid in both cases. Similarly, the assumption $\lambda_1=1, \lambda_2=0$ does not hold, either. In conclusion, $\lambda_1> 0, \lambda_2> 0$.
\end{proof}

\section{PROOF of PROPOSITION \ref{prop:main}}
\label{app:B}

In this appendix, we prove the asymptotic proposition. The proof consists of two parts. We first establish a fully rigorous upper bound on the multicast capacity. We then construct a feasible RIS phase profile and a feasible rank-one covariance matrix to derive a matching lower bound. The lower-bound part uses the standard large-aperture stationary-phase approximation for the dominant RIS-side oscillatory sum.

\subsection{Setup and auxiliary quantities}
For a fixed RIS phase vector $\boldsymbol{\phi}$, define the effective channel vectors
\begin{equation}
	\boldsymbol f_k(\boldsymbol{\phi}) = \frac{\boldsymbol H^H\boldsymbol \Phi^H\boldsymbol h_k + \boldsymbol d_k}{\sigma_k},
	\qquad k=1,2,
\end{equation}
where $\boldsymbol \Phi = \diag(\boldsymbol{\phi})$, and let
\begin{equation}
	a(\boldsymbol{\phi}) = \lVert \boldsymbol f_1(\boldsymbol{\phi})\rVert^2,
	\qquad
	b(\boldsymbol{\phi}) = \lVert \boldsymbol f_2(\boldsymbol{\phi})\rVert^2.
\end{equation}
For fixed $\boldsymbol{\phi}$, define the optimal common receive gain directly in the covariance-matrix form as
\begin{equation}
	\Gamma^{\star}(\boldsymbol{\phi}) := \max_{\boldsymbol R\succeq 0,\, \Tr(\boldsymbol R)\le P_{\max}}
	\min\left\{\boldsymbol f_1^H(\boldsymbol{\phi})\boldsymbol R\boldsymbol f_1(\boldsymbol{\phi}),\,
	\boldsymbol f_2^H(\boldsymbol{\phi})\boldsymbol R\boldsymbol f_2(\boldsymbol{\phi})\right\}.
\end{equation}
Hence the multicast capacity is
\begin{equation}
	C = \max_{\boldsymbol{\phi}} \log_2\!\bigl(1+\Gamma^{\star}(\boldsymbol{\phi})\bigr).
\end{equation}

We next collect several bounded quantities that will be used repeatedly.

\paragraph*{1) BS-RIS spectral norm bound}
The BS-RIS channel can be written as
\begin{equation}
	\boldsymbol H = \sqrt{\frac{MN}{L_{\mathrm{BR}}}}\sum_{\ell=0}^{L_{\mathrm{BR}}-1}
	\beta_{\ell}\, \boldsymbol a_{R,\ell}\boldsymbol a_{B,\ell}^H,
\end{equation}
where $L_{\mathrm{BR}}$ is fixed and the path gains $\{\beta_{\ell}\}$ are bounded. Define
\begin{equation}
	\bar H := \frac{1}{\sqrt{L_{\mathrm{BR}}}}\sum_{\ell=0}^{L_{\mathrm{BR}}-1}|\beta_{\ell}|.
\end{equation}
Since the steering vectors have unit norm,
\begin{equation}
	\lVert \boldsymbol H\rVert_2 \le \sqrt{\frac{MN}{L_{\mathrm{BR}}}}
	\sum_{\ell=0}^{L_{\mathrm{BR}}-1}|\beta_{\ell}| = \bar H\sqrt{MN}.
\end{equation}
In the simulation-consistent specialization adopted above, $L_{\mathrm{BR}}=4$, with one LoS path and three NLoS paths, but the proof only needs $L_{\mathrm{BR}}<\infty$ and bounded path gains.

\paragraph*{2) Direct-link bounds}
For each user $k\in\{1,2\}$, the direct BS-user link is far-field with a finite number of bounded-gain propagation paths. Hence there exists a constant $\bar d_k>0$, independent of $M$ and $N$, such that
\begin{equation}
	\lVert \boldsymbol d_k\rVert \le \bar d_k\sqrt{M},
	\qquad k=1,2.
\end{equation}
In the simulation-consistent specialization, each direct link contains three NLoS paths and no LoS path, i.e., $L_{k,\mathrm{BU}}=3$, and
\begin{equation}
	\boldsymbol d_k = \sqrt{\frac{M}{3}}\sum_{\ell=1}^{3}\alpha_{\ell,k}\boldsymbol a_{B,\ell,k},
	\qquad k=1,2,
\end{equation}
so that one may take
\begin{equation}
	\bar d_k = \frac{1}{\sqrt 3}\sum_{\ell=1}^{3}|\alpha_{\ell,k}|.
\end{equation}

\paragraph*{3) Near-field user norm and amplitude-ratio bounds}
For the RIS-near-field user, the channel is modeled as a single-path spherical-wave channel, and the channel amplitude satisfies
\begin{equation}
	|[\boldsymbol h_1]_n| \propto r_{n,1}^{-3/2}.
\end{equation}
Under Assumption~1, the near-field channel power remains bounded away from both zero and infinity in the large-aperture regime. Therefore, there exist constants $\underline h_1>0$ and $\bar h_1>0$, independent of $N$, such that
\begin{equation}
	\underline h_1 \le \lVert \boldsymbol h_1\rVert \le \bar h_1,
	\qquad \forall N.
\end{equation}
Let $D_{\mathrm{RIS}}$ denote the RIS aperture diameter. Since $D_{\mathrm{RIS}}=\Theta(d\sqrt N)$, there exists a constant $c_D>0$, independent of $N$, such that
\begin{equation}
	D_{\mathrm{RIS}} \le c_D d\sqrt N.
\end{equation}
By simple geometry, every RIS element satisfies
\begin{equation}
	r_1 \le r_{n,1} \le \sqrt{r_1^2 + D_{\mathrm{RIS}}^2}.
\end{equation}
Using Assumption~1, we obtain
\begin{equation}
	\frac{\max_n r_{n,1}}{\min_n r_{n,1}} \le
	\sqrt{1+\frac{D_{\mathrm{RIS}}^2}{r_1^2}}
	\le \sqrt{1+\frac{c_D^2}{\rho_L^2}}.
\end{equation}
Define the amplitude ratio
\begin{equation}
	\eta_N := \frac{\max_n |[\boldsymbol h_1]_n|}{\min_n |[\boldsymbol h_1]_n|}.
\end{equation}
Since $|[\boldsymbol h_1]_n|\propto r_{n,1}^{-3/2}$,
\begin{equation}
	\eta_N \le
	\left(\frac{\max_n r_{n,1}}{\min_n r_{n,1}}\right)^{3/2}
	\le \left(1+\frac{c_D^2}{\rho_L^2}\right)^{3/4}.
\end{equation}
Hence there exists a constant $\bar\eta>0$ such that
\begin{equation}
	\eta_N \le \bar\eta,
	\qquad \forall N.
\end{equation}

\subsection{Upper bound}
We first derive the rigorous upper bound.

Fix an arbitrary RIS phase vector $\boldsymbol{\phi}$ and an arbitrary feasible covariance matrix $\boldsymbol R\succeq 0$ with $\Tr(\boldsymbol R)\le P_{\max}$. Then
\begin{equation}
	\min\left\{\boldsymbol f_1^H\boldsymbol R\boldsymbol f_1,\,\boldsymbol f_2^H\boldsymbol R\boldsymbol f_2\right\}
	\le \boldsymbol f_1^H\boldsymbol R\boldsymbol f_1.
\end{equation}
Since $\boldsymbol R\succeq 0$,
\begin{equation}
	\boldsymbol f_1^H\boldsymbol R\boldsymbol f_1 \le \lambda_{\max}(\boldsymbol R)\,\lVert \boldsymbol f_1\rVert^2.
\end{equation}
Moreover,
\begin{equation}
	\lambda_{\max}(\boldsymbol R) \le \Tr(\boldsymbol R) \le P_{\max}.
\end{equation}
Hence
\begin{equation}
	\min\left\{\boldsymbol f_1^H\boldsymbol R\boldsymbol f_1,\,\boldsymbol f_2^H\boldsymbol R\boldsymbol f_2\right\}
	\le P_{\max} a(\boldsymbol{\phi}).
\end{equation}
Since this holds for every feasible $\boldsymbol R$, maximizing over $\boldsymbol R$ gives
\begin{equation}
	\Gamma^{\star}(\boldsymbol{\phi}) \le P_{\max} a(\boldsymbol{\phi}).
\end{equation}
We next upper-bound $a(\boldsymbol{\phi})$. From the definition,
\begin{equation}
	\sqrt{a(\boldsymbol{\phi})} = \frac{\lVert \boldsymbol H^H\boldsymbol \Phi^H\boldsymbol h_1 + \boldsymbol d_1\rVert}{\sigma_1}.
\end{equation}
Using the triangle inequality,
\begin{equation}
	\lVert \boldsymbol H^H\boldsymbol \Phi^H\boldsymbol h_1 + \boldsymbol d_1\rVert
	\le \lVert \boldsymbol H^H\boldsymbol \Phi^H\boldsymbol h_1\rVert + \lVert \boldsymbol d_1\rVert.
\end{equation}
The first term satisfies
\begin{equation}
	\lVert \boldsymbol H^H\boldsymbol \Phi^H\boldsymbol h_1\rVert
	\le \lVert \boldsymbol H^H\rVert_2\lVert \boldsymbol \Phi^H\rVert_2\lVert \boldsymbol h_1\rVert
	= \lVert \boldsymbol H\rVert_2\lVert \boldsymbol h_1\rVert
	\le \bar H\bar h_1\sqrt{MN},
\end{equation}
where we used $\lVert \boldsymbol \Phi\rVert_2=1$. For the direct term,
\begin{equation}
	\lVert \boldsymbol d_1\rVert \le \bar d_1\sqrt M.
\end{equation}
Since $N\ge 1$, we have $\sqrt M\le \sqrt{MN}$. Therefore,
\begin{equation}
	\lVert \boldsymbol H^H\boldsymbol \Phi^H\boldsymbol h_1 + \boldsymbol d_1\rVert
	\le (\bar H\bar h_1 + \bar d_1)\sqrt{MN}.
\end{equation}
Squaring both sides and dividing by $\sigma_1^2$ yields
\begin{equation}
	a(\boldsymbol{\phi}) \le \frac{(\bar H\bar h_1 + \bar d_1)^2}{\sigma_1^2}MN
	=: \bar A_1 MN.
\end{equation}
Combining the previous displays, we obtain
\begin{equation}
	\Gamma^{\star}(\boldsymbol{\phi}) \le \bar A_1 P_{\max}MN.
\end{equation}
Hence,
\begin{equation}
	C \le \log_2\!\bigl(1+\bar A_1 P_{\max}MN\bigr).
\end{equation}
Therefore, the upper bound in the asymptotic proposition holds with $\kappa_U=\bar A_1$.

\subsection{Lower bound}
We next derive the matching lower bound.

\paragraph*{1) Construction of the RIS phase profile}
Let $\ell=0$ denote the dominant LoS path of the BS-RIS channel. We choose
\begin{equation}
	[\boldsymbol{\phi}_0]_n = \exp\!\left(j\frac{2\pi}{\lambda}r_{n,1}\right)
	\frac{[\boldsymbol a_{R,0}]_n^{\ast}}{|[\boldsymbol a_{R,0}]_n|}.
\end{equation}
This choice performs two tasks simultaneously: it compensates the spherical-wave phase of the RIS-near-field user and aligns the dominant LoS BS-RIS steering vector. Define
\begin{equation}
	a_0 := a(\boldsymbol{\phi}_0),
	\qquad
	b_0 := b(\boldsymbol{\phi}_0).
\end{equation}
We will show that both $a_0$ and $b_0$ are at least of order $MN$.

\paragraph*{2) Lower bound on $a_0$}
Define
\begin{equation}
	x_0(\boldsymbol{\phi}_0) := \frac{1}{\sqrt N}\sum_{n=1}^{N}|[\boldsymbol h_1]_n|.
\end{equation}
Because the phase in the chosen $\boldsymbol{\phi}_0$ cancels the near-field propagation phase and aligns the dominant RIS steering vector, the LoS contribution to the cascaded near-field channel adds coherently. We now lower-bound $x_0(\boldsymbol{\phi}_0)$. By definition,
\begin{equation}
	\sum_{n=1}^{N}|[\boldsymbol h_1]_n| \ge N\min_n |[\boldsymbol h_1]_n|.
\end{equation}
Also,
\begin{equation}
	\max_n |[\boldsymbol h_1]_n| \ge \frac{\lVert \boldsymbol h_1\rVert}{\sqrt N}.
\end{equation}
Using the amplitude-ratio bound,
\begin{equation}
	\min_n |[\boldsymbol h_1]_n| \ge \frac{1}{\bar\eta}\max_n |[\boldsymbol h_1]_n|
	\ge \frac{\lVert \boldsymbol h_1\rVert}{\bar\eta\sqrt N}.
\end{equation}
Therefore,
\begin{equation}
	\sum_{n=1}^{N}|[\boldsymbol h_1]_n| \ge N\cdot \frac{\lVert \boldsymbol h_1\rVert}{\bar\eta\sqrt N}
	= \frac{\lVert \boldsymbol h_1\rVert}{\bar\eta}\sqrt N
	\ge \frac{\underline h_1}{\bar\eta}\sqrt N.
\end{equation}
Dividing by $\sqrt N$ yields
\begin{equation}
	x_0(\boldsymbol{\phi}_0) \ge \frac{\underline h_1}{\bar\eta}
	=: x_0^{\circ}>0.
\end{equation}
Now consider the cascaded term $\boldsymbol H^H\boldsymbol \Phi_0^H\boldsymbol h_1$. Retaining only the dominant LoS BS-RIS path in the expansion of $\boldsymbol H$, the coherent contribution gives
\begin{equation}
	\lVert \boldsymbol H^H\boldsymbol \Phi_0^H\boldsymbol h_1\rVert^2
	\ge \frac{MN}{L_{\mathrm{BR}}}|\beta_0|^2|x_0(\boldsymbol{\phi}_0)|^2.
\end{equation}
Hence
\begin{equation}
	\lVert \boldsymbol H^H\boldsymbol \Phi_0^H\boldsymbol h_1\rVert
	\ge \frac{|\beta_0|x_0^{\circ}}{\sqrt{L_{\mathrm{BR}}}}\sqrt{MN}
	=: q_1\sqrt{MN}.
\end{equation}
We next reintroduce the direct link through the reverse triangle inequality:
\begin{equation}
	\lVert \boldsymbol H^H\boldsymbol \Phi_0^H\boldsymbol h_1+\boldsymbol d_1\rVert
	\ge \lVert \boldsymbol H^H\boldsymbol \Phi_0^H\boldsymbol h_1\rVert-\lVert \boldsymbol d_1\rVert
	\ge q_1\sqrt{MN}-\bar d_1\sqrt M.
\end{equation}
If
\begin{equation}
	N \ge N_1 := \left(\frac{2\bar d_1}{q_1}\right)^2,
\end{equation}
then
\begin{equation}
	q_1\sqrt{MN} - \bar d_1\sqrt M \ge \frac{q_1}{2}\sqrt{MN}.
\end{equation}
Therefore, for all $N\ge N_1$,
\begin{equation}
	a_0 = \frac{\lVert \boldsymbol H^H\boldsymbol \Phi_0^H\boldsymbol h_1 + \boldsymbol d_1\rVert^2}{\sigma_1^2}
	\ge \frac{q_1^2}{4\sigma_1^2}MN
	=: A_1MN.
\end{equation}

\paragraph*{3) Lower bound on $b_0$ under the single-LoS far-field model}
Under the updated specialization, the RIS-far-field-user channel is
\begin{equation}
	\boldsymbol h_2 = \sqrt N\, \beta_2\boldsymbol a_{R,2},
\end{equation}
where $\boldsymbol a_{R,2}$ is a unit-norm far-field steering vector and $\beta_2$ is bounded independently of $N$. Hence,
\begin{equation}
	\boldsymbol H^H\boldsymbol \Phi_0^H\boldsymbol h_2 =
	\sqrt{\frac{MN^2}{L_{\mathrm{BR}}}}\,\beta_2^{\ast}
	\sum_{\ell=0}^{L_{\mathrm{BR}}-1}\beta_{\ell}^{\ast}y_{\ell}^{\ast}(\boldsymbol{\phi}_0)\boldsymbol a_{B,\ell},
\end{equation}
where the RIS-side coupling coefficient is defined as
\begin{equation}
	y_{\ell}(\boldsymbol{\phi}_0) := \boldsymbol a_{R,2}^H\boldsymbol \Phi_0\boldsymbol a_{R,\ell}.
\end{equation}
For the dominant LoS term $\ell=0$, the quantity $Ny_0(\boldsymbol{\phi}_0)$ is a two-dimensional oscillatory lattice sum whose phase consists of a linear far-field part and a quadratic Fresnel term induced by the near-field compensation. The quadratic part can be written as
\begin{equation}
	f^{(2)}(n_y,n_z) = \frac{\pi d^2}{\lambda r_1}
	\left(n_y^2 c_y + n_z^2 c_z - 2n_yn_z\Phi_1\Omega_1\right),
\end{equation}
where
\begin{equation}
	c_y = 1-\Phi_1^2,
	\qquad
	c_z = 1-\Omega_1^2.
\end{equation}
The Hessian determinant of this quadratic form is
\begin{equation}
	\det(H_f) = \left(\frac{2\pi d^2}{\lambda r_1}\right)^2
	\left(c_yc_z-\Phi_1^2\Omega_1^2\right)
	= \left(\frac{2\pi d^2}{\lambda r_1}\right)^2\Psi_1^2>0,
\end{equation}
where we used
\begin{equation}
	c_yc_z-\Phi_1^2\Omega_1^2 = 1-\Phi_1^2-\Omega_1^2 = \Psi_1^2,
\end{equation}
and Assumption~1 ensures $|\Psi_1|\ge \psi_{\min}>0$, so the stationary point is nondegenerate. Assumption~2 guarantees that the corresponding stationary point lies inside the RIS aperture. Since only $|\Psi_1|$ enters through $|\det(H_f)|^{-1/2}$, the stationary-phase constants depend on $|\Psi_1|$ rather than on the sign of $\Psi_1$.

By the standard large-aperture stationary-phase approximation, there exist constants $C_y^{(\ell)}>0$, $C_y^{(u)}>0$, and an integer $N_y$, all independent of $N$, such that for every $N\ge N_y$,
\begin{equation}
	C_y^{(\ell)}\sqrt N \le |y_0(\boldsymbol{\phi}_0)| \le C_y^{(u)}\sqrt N.
\end{equation}
Retaining only the dominant LoS term $\ell=0$ in the previous expansion gives
\begin{equation}
	\lVert \boldsymbol H^H\boldsymbol \Phi_0^H\boldsymbol h_2\rVert^2
	\ge \frac{MN^2}{L_{\mathrm{BR}}}|\beta_2|^2|\beta_0|^2|y_0(\boldsymbol{\phi}_0)|^2.
\end{equation}
Using the stationary-phase lower bound, we obtain for all $N\ge N_y$,
\begin{equation}
	\lVert \boldsymbol H^H\boldsymbol \Phi_0^H\boldsymbol h_2\rVert^2
	\ge \frac{|\beta_2|^2|\beta_0|^2(C_y^{(\ell)})^2}{L_{\mathrm{BR}}}MN.
\end{equation}
Define
\begin{equation}
	q_2 := \frac{|\beta_2||\beta_0|C_y^{(\ell)}}{\sqrt{L_{\mathrm{BR}}}}.
\end{equation}
Then
\begin{equation}
	\lVert \boldsymbol H^H\boldsymbol \Phi_0^H\boldsymbol h_2\rVert \ge q_2\sqrt{MN},
	\qquad N\ge N_y.
\end{equation}
We again reintroduce the direct link:
\begin{equation}
	\lVert \boldsymbol H^H\boldsymbol \Phi_0^H\boldsymbol h_2+\boldsymbol d_2\rVert
	\ge \lVert \boldsymbol H^H\boldsymbol \Phi_0^H\boldsymbol h_2\rVert-\lVert \boldsymbol d_2\rVert
	\ge q_2\sqrt{MN}-\bar d_2\sqrt M.
\end{equation}
If
\begin{equation}
	N \ge N_2 := \left(\frac{2\bar d_2}{q_2}\right)^2,
\end{equation}
then
\begin{equation}
	q_2\sqrt{MN} - \bar d_2\sqrt M \ge \frac{q_2}{2}\sqrt{MN}.
\end{equation}
Therefore, for all $N\ge \max\{N_y,N_2\}$,
\begin{equation}
	b_0 = \frac{\lVert \boldsymbol H^H\boldsymbol \Phi_0^H\boldsymbol h_2+\boldsymbol d_2\rVert^2}{\sigma_2^2}
	\ge \frac{q_2^2}{4\sigma_2^2}MN
	=: A_2MN.
\end{equation}

\paragraph*{4) Feasible covariance construction and lower bound on $\Gamma^{\star}(\boldsymbol{\phi}_0)$}
At this point we know that
\begin{equation}
	\lVert \boldsymbol f_1(\boldsymbol{\phi}_0)\rVert^2 = a_0 \ge A_1MN,
	\qquad
	\lVert \boldsymbol f_2(\boldsymbol{\phi}_0)\rVert^2 = b_0 \ge A_2MN.
\end{equation}
Define the normalized vectors
\begin{equation}
	\boldsymbol u_1 := \frac{\boldsymbol f_1(\boldsymbol{\phi}_0)}{\lVert \boldsymbol f_1(\boldsymbol{\phi}_0)\rVert},
	\qquad
	\boldsymbol u_2 := e^{-j\arg(\boldsymbol f_1^H(\boldsymbol{\phi}_0)\boldsymbol f_2(\boldsymbol{\phi}_0))}
	\frac{\boldsymbol f_2(\boldsymbol{\phi}_0)}{\lVert \boldsymbol f_2(\boldsymbol{\phi}_0)\rVert},
\end{equation}
and choose
\begin{equation}
	\boldsymbol w_0 := \frac{\boldsymbol u_1+\boldsymbol u_2}{\lVert \boldsymbol u_1+\boldsymbol u_2\rVert}.
\end{equation}
Since $\lVert \boldsymbol u_1\rVert = \lVert \boldsymbol u_2\rVert = 1$ and the phase rotation makes $\boldsymbol u_1^H\boldsymbol u_2\in[0,1]$, we have
\begin{equation}
	|\boldsymbol f_1^H(\boldsymbol{\phi}_0)\boldsymbol w_0|^2 \ge \frac{a_0}{2},
	\qquad
	|\boldsymbol f_2^H(\boldsymbol{\phi}_0)\boldsymbol w_0|^2 \ge \frac{b_0}{2}.
\end{equation}
Now define the feasible rank-one covariance matrix
\begin{equation}
	\boldsymbol R_0 := P_{\max}\boldsymbol w_0\boldsymbol w_0^H.
\end{equation}
Then $\boldsymbol R_0\succeq 0$ and
\begin{equation}
	\Tr(\boldsymbol R_0)=P_{\max},
\end{equation}
so $\boldsymbol R_0$ is feasible. Therefore,
\begin{equation}
	\begin{aligned}
	\Gamma^{\star}(\boldsymbol{\phi}_0)&\ge \min\left\{\boldsymbol f_1^H(\boldsymbol{\phi}_0)\boldsymbol R_0\boldsymbol f_1(\boldsymbol{\phi}_0),\,
	\boldsymbol f_2^H(\boldsymbol{\phi}_0)\boldsymbol R_0\boldsymbol f_2(\boldsymbol{\phi}_0)\right\}\\
	&\ge \frac{P_{\max}}{2}\min\{a_0,b_0\}.
	\end{aligned}
\end{equation}
Combining the lower bounds on $a_0$ and $b_0$, we obtain for all
\begin{equation}
	N \ge N_0 := \max\{N_1,N_2,N_y\}
\end{equation}
that
\begin{equation}
	\Gamma^{\star}(\boldsymbol{\phi}_0) \ge \frac{P_{\max}}{2}\min\{A_1,A_2\}MN.
\end{equation}
Define
\begin{equation}
	\kappa_L := \frac{1}{2}\min\{A_1,A_2\}.
\end{equation}
Then
\begin{equation}
	\Gamma^{\star}(\boldsymbol{\phi}_0) \ge \kappa_L P_{\max}MN,
	\qquad N\ge N_0.
\end{equation}
Finally, since $C$ maximizes over all feasible RIS phase vectors,
\begin{equation}
	C \ge \log_2\!\bigl(1+\Gamma^{\star}(\boldsymbol{\phi}_0)\bigr)
	\ge \log_2\!\bigl(1+\kappa_L P_{\max}MN\bigr),
	\quad N\ge N_0.
\end{equation}

\subsection{Conclusion}
The rigorous upper bound and the constructive lower bound show that for every $M\ge 1$ and all $N\ge N_0$,
\begin{equation}
	\log_2\!\bigl(1+\kappa_L P_{\max}MN\bigr)
	\le C \le
	\log_2\!\bigl(1+\kappa_U P_{\max}MN\bigr),
\end{equation}
where $\kappa_L$, $\kappa_U$, and $N_0$ are independent of $M$, $N$, and $P_{\max}$. This proves the asymptotic proposition.

\begin{remark}
	An alternative lower-bound route starts from the exact fixed-$\boldsymbol{\phi}$ closed-form SINR and then requires a sufficiently sharp upper bound on the cross-correlation term $c_0 = \boldsymbol f_1^H(\boldsymbol{\phi}_0)\boldsymbol f_2(\boldsymbol{\phi}_0)$. That route is valid in principle but technically more delicate in the present hybrid near-/far-field setting. The covariance-construction route adopted here bypasses that step entirely.
\end{remark}
\end{appendices}

\section*{ACKNOWLEDGMENTS}
This work was supported by the National Key R\&D Program of China under grant 2021YFA0716600. (Corresponding author: Hongcheng Zhuang)


\begin{thebibliography}{1}
\bibliographystyle{IEEEtran}
\bibitem{intro1}
F. Qi, Q. Huang, Y. Wang, P. Lin, C. Hu and X. Shi, ``IRS-Assisted Joint Broadcast/Multicast Resource Allocation and Cooperative Perception Optimization for 6G V2X Networks," in IEEE Transactions on Broadcasting, vol. 72, no. 1, pp. 106-116, March 2026.

\bibitem{intro2}
N. Q. Hieu, D. N. Nguyen, D. T. Hoang and E. Dutkiewicz, ``When Virtual Reality Meets Rate Splitting Multiple Access: A Joint Communication and Computation Approach," in IEEE Journal on Selected Areas in Communications, vol. 41, no. 5, pp. 1536-1548, May 2023.

\bibitem{intro3}
N. Chukhno et al., ``Models, Methods, and Solutions for Multicasting in 5G/6G mmWave and Sub-THz Systems," in IEEE Communications Surveys \& Tutorials, vol. 26, no. 1, pp. 119-159, Firstquarter 2024.

\bibitem{intro4}
K. Zhi et al., ``Performance Analysis and Low-Complexity Design for XL-MIMO With Near-Field Spatial Non-Stationarities," in IEEE Journal on Selected Areas in Communications, vol. 42, no. 6, pp. 1656-1672, June 2024.

\bibitem{intro5}
Y. Liu, Z. Wang, J. Xu, C. Ouyang, X. Mu and R. Schober, ``Near-Field Communications: A Tutorial Review," in IEEE Open Journal of the Communications Society, vol. 4, pp. 1999-2049, 2023.



\bibitem{intro8}
R. Li, L. Zhao, M. Li, M. -M. Zhao and D. W. K. Ng, ``Sensing-Based Channel Estimation for Extremely Large-Scale RIS-Assisted Millimeter-Wave Communication Systems," in IEEE Internet of Things Journal, vol. 12, no. 23, pp. 50732-50746, 1 Dec.1, 2025.

\bibitem{intro8-1}
T. Van Chien, L. T. Tu, S. Chatzinotas and B. Ottersten, ``Coverage Probability and Ergodic Capacity of Intelligent Reflecting Surface-Enhanced Communication Systems," in IEEE Communications Letters, vol. 25, no. 1, pp. 69-73, Jan. 2021.

\bibitem{intro8-2}
S. Zhang, Z. Yang, M. Chen, D. Liu, K. -K. Wong and H. V. Poor, ``Beamforming Design for the Performance Optimization of Intelligent Reflecting Surface Assisted Multicast MIMO Networks," in IEEE Transactions on Wireless Communications, vol. 23, no. 3, pp. 2325-2339, March 2024.

\bibitem{intro8-3}
X. Yu, W. Shen, R. Zhang, C. Xing and T. Q. S. Quek, ``Channel Estimation for XL-RIS-Aided Millimeter-Wave Systems," in IEEE Transactions on Communications, vol. 71, no. 9, pp. 5519-5533, Sept. 2023.

\bibitem{intro8-4}
X. Bian, W. Xu, Y. Wang, C. Yuen and Z. Cai, ``A Sparse-Based Channel Estimation Scheme for XL-RIS-Assisted Wireless Communications With Low Pilot Overheads," in IEEE Transactions on Communications, vol. 74, pp. 367-380, 2026.

\bibitem{intro8-5}
J. Lee, H. Chung, Y. Cho, S. Kim and S. Hong, ``Near-Field Channel Estimation for XL-RIS Assisted Multi-User XL-MIMO Systems: Hybrid Beamforming Architectures," in IEEE Transactions on Communications, vol. 73, no. 3, pp. 1560-1574, March 2025.

\bibitem{intro8-6}
P. Zheng, X. Lyu, Y. Wang and Y. Gong, ``Convolutional Dictionary Learning-Based Hybrid-Field Channel Estimation for XL-RIS-Aided Massive MIMO Systems," in IEEE Transactions on Wireless Communications, vol. 24, no. 11, pp. 9085-9098, Nov. 2025.

\bibitem{intro8-7}
A. S. Gharagezlou, M. Rasti, S. Ali, S. K. Taskooh and M. Latva-aho, ``XL-RIS Placement Strategies for Beam Focusing in Coexisting Near-Field and Far-Field mmWave Communications," 2025 IEEE Wireless Communications and Networking Conference (WCNC), Milan, Italy, 2025, pp. 1-6.


\bibitem{intro9-pre}
T. Ma, B. Qian, X. Qin, X. Zhang, L. X. Cai and H. Zhou, ``Resource Scheduling for High-Capacity Multicast Service in Ultra-Dense LEO Satellite Networks," in IEEE Transactions on Vehicular Technology, vol. 73, no. 2, pp. 2468-2481, Feb. 2024.

\bibitem{intro9}
N. Jindal and Z. -q. Luo, ``Capacity Limits of Multiple Antenna Multicast," 2006 IEEE International Symposium on Information Theory, Seattle, WA, USA, 2006, pp. 1841-1845.

\bibitem{intro10}
S. Y. Park and D. J. Love, ``Capacity Limits of Multiple Antenna Multicasting Using Antenna Subset Selection," in IEEE Transactions on Signal Processing, vol. 56, no. 6, pp. 2524-2534, June 2008.

\bibitem{intro11}
L. Du, S. Shao, G. Yang, J. Ma, Q. Liang and Y. Tang, ``Capacity Characterization for Reconfigurable Intelligent Surfaces Assisted Multiple-Antenna Multicast," in IEEE Transactions on Wireless Communications, vol. 20, no. 10, pp. 6940-6953, Oct. 2021.

\bibitem{intro12}
B. Zhao, C. Ouyang, X. Zhang and Y. Liu, ``Channel Capacity of Near-Field Line-of-Sight Multiuser Communications," in IEEE Transactions on Wireless Communications, vol. 24, no. 5, pp. 4392-4409, May 2025.

\bibitem{intro2-2}
M. Cui and L. Dai, ``Channel Estimation for Extremely Large-Scale MIMO: Far-Field or Near-Field?," in IEEE Transactions on Communications, vol. 70, no. 4, pp. 2663-2677, April 2022.

\bibitem{channelSV}
X. He, H. Xu, G. Zhang, Y. Xia and Z. Wang, ``RIS-Assisted Joint Near-Field Channel Estimation and Beamforming Design Based on Deep Unfolding," in IEEE Transactions on Vehicular Technology, vol. 75, no. 3, pp. 4411-4423, March 2026.

\bibitem{ref1}
X. Wei, L. Dai, Y. Zhao, G. Yu and X. Duan, ``Codebook design and beam training for extremely large-scale RIS: Far-field or near-field?," in China Communications, vol. 19, no. 6, pp. 193-204, June 2022.

\bibitem{ref3}
K. Zhi et al., ``Performance Analysis and Low-Complexity Design for XL-MIMO With Near-Field Spatial Non-Stationarities," in IEEE Journal on Selected Areas in Communications, vol. 42, no. 6, pp. 1656-1672, June 2024.

\bibitem{ref3-1}
H. Lu and Y. Zeng, ``Communicating With Extremely Large-Scale Array/Surface: Unified Modeling and Performance Analysis," in IEEE Transactions on Wireless Communications, vol. 21, no. 6, pp. 4039-4053, June 2022.

\bibitem{convex}
Boyd, S., Vandenberghe, L. (2004). Convex Optimization Cambridge: Cambridge University Press. 

\bibitem{alg1}
Y. Geng, T. Hiang Cheng, K. Zhong and K. Chan Teh, ``Unified Manifold Optimization for Double-IRS-Aided MIMO Communication," in IEEE Communications Letters, vol. 28, no. 7, pp. 1713-1717, July 2024.

\bibitem{stein1993harmonic}
E.~M.~Stein,
\emph{Harmonic Analysis: Real-Variable Methods, Orthogonality,
	and Oscillatory Integrals}.
Princeton, NJ, USA: Princeton Univ. Press, 1993.

\bibitem{ref4}
M. R. Akdeniz et al., ``Millimeter Wave Channel Modeling and Cellular Capacity Evaluation," in IEEE Journal on Selected Areas in Communications, vol. 32, no. 6, pp. 1164-1179, June 2014.

\bibitem{sca}
V. Kumar, R. Zhang, M. D. Renzo and L. -N. Tran, ``A Novel SCA-Based Method for Beamforming Optimization in IRS/RIS-Assisted MU-MISO Downlink," in IEEE Wireless Communications Letters, vol. 12, no. 2, pp. 297-301, Feb. 2023.

\bibitem{cvx}
	Michael Grant and Stephen Boyd (2014). CVX: Matlab software for disciplined convex programming, version 2.1.






\end{thebibliography}
\end{document}